\newcites{main}{References}
\newcites{app}{Supplementary References}
\newcommand{\gridSep}{3}
\newcommand{\gridSepCm}{3cm}
\newcommand{\opacBackground}{.2}
\newcommand{\opacLC}{0.9}
\newcommand{\opacRectBoundary}{0.7}
\newcommand{\gridColor}{gray}
\newcommand{\opacGrid}{0.5}
\newcommand{\E}{\mathbb{E}}
\newcommand{\R}{\mathbb{R}}
\newcommand{\T}{\mathbb{T}}
\newcommand{\Prob}{\mathbb{P}} 
\renewcommand{\Prob}[1]{\mathbb{P}\left( #1 \right)} 
\newcommand{\indep}{\protect\mathpalette{\protect\independenT}{\perp}}
\def\independenT#1#2{\mathrel{\setbox0\hbox{$#1#2$}%
\copy0\kern-\wd0\mkern4mu\box0}}
\newcommand{\vnorm}[1]{\lVert #1 \rVert}
\newcommand{\card}[1]{| #1 |}
\newcommand{\KL}[2]{\mathcal{D}_{KL}\left(#1 \mid \mid #2 \right)} 
\newcommand{\SymKLexp}[2]{\frac{\KL{p}{q} + \KL{q}{p}}{2}}
\newcommand{\vecS}[1]{\mathbf{#1}}
\newcommand{\field}[3]{#1\left(\vecS{#2}, #3 \right) }
\newcommand{\bigO}[1]{\ensuremath{\operatorname{\mathcal{O}}\bigl(#1\bigr)}}
\newcommand{\littleO}[1]{\ensuremath{\operatorname{o}\bigl(#1\bigr)}}
\newcounter{bar}
\newcommand{\foo}{%
        \stepcounter{bar}%
        \thebar}
\newtheorem{theorem}{Theorem}[section]
\newtheorem{assumption}[theorem]{Assumption}
\newtheorem{corollary}[theorem]{Corollary}
\newtheorem{definition}[theorem]{Definition}
\newtheorem{lemma}[theorem]{Lemma}
\newtheorem{properties}[theorem]{Properties}
\newcommand{\asNtoinfty}{\xrightarrow[N\rightarrow\infty]{}}
\newcommand{\ThesisTitle}{LICORS: Light Cone Reconstruction of States for Non-parametric Forecasting of Spatio-Temporal Systems}
\title{LICORS: Light Cone Reconstruction of States for Non-parametric Forecasting of Spatio-Temporal Systems}
\date{\today}
\author{Georg M.\ Goerg \and Cosma Rohilla Shalizi\thanks{Department of
    Statistics, Carnegie Mellon University, Pittsburgh, PA 15213 USA;
    \texttt{\{ gmg, cshalizi \} @ stat.cmu.edu}.  This work was partially
    supported by grants from INET, from the NIH (\# 2 R01 NS047493), and from
    the NSF (DMS1207759).  The authors thank Stacey Ackerman-Alexeeff, Dave
    Albers, Chris Genovese, Rob Haslinger, Martin Nilsson Jacobi, Heike
    J{\"a}nicke, Kristina Klinkner, Cristopher Moore, Jean-Baptiste Rouquier,
    Chad Schafer, Rafael Stern, and Chris Wiggins for valuable discussion, and
    Larry Wasserman for detailed suggestions that have improved all aspects of
    this work.}}
\begin{document}

\maketitle
\begin{abstract}
  We present a new, non-parametric forecasting method for data where continuous
  values are observed discretely in space and time.  Our method, {\em
    light-cone reconstruction of states} (LICORS), uses physical principles to
  identify predictive states which are local properties of the system, both in
  space and time.  LICORS discovers the number of predictive states and their
  predictive distributions automatically, and consistently, under mild
  assumptions on the data source.  We provide an algorithm to implement our
  method, along with a cross-validation scheme to pick control settings.
  Simulations show that CV-tuned LICORS outperforms standard methods in
  forecasting challenging spatio-temporal dynamics.  Our work provides applied
  researchers with a new, highly automatic method to analyze and forecast
  spatio-temporal data.
\end{abstract}

\textbf{Keywords:} non-parametric prediction, dynamical system, forecasting, 
  predictive state reconstruction, spatio-temporal data.

\section{Introduction}
\label{sec:Introduction}

Many important scientific and data-analytic problems involve fields which vary
over both space and time, e.g., data from functional magnetic resonance
imaging, meteorological observations, or experimental studies in physics and
chemistry.  An outstanding objective in studying such data is prediction, where
we want to describe the field in the future.

Spatio-temporal data being increasingly easy to acquire, manipulate and
visualize, statisticians have developed corresponding methods for statistical
inference, reviewed in works like
\citetmain{Finkenstadt-Held-Isham-spatio-temporal-systems,Cressie-Wikle-spatio-temporal-data}. The
usual tools are a combination of ways of describing the distribution of the
random field (e.g., various dependency measures), and stochastic modeling,
focusing primarily on parametric inference, and secondarily on
parameter-conditional predictions.

While these approaches are valuable, there is a complementary role for direct,
non-parametric prediction of spatio-temporal data, just as with time series
\citepmain{Bosq-nonparametric,Fan-Yao-time-series}. Our aim here is to blend
modern methods of non-parametric prediction with insights from nonlinear
physics on the organization of spatial dynamics, yielding predictors of
spatio-temporal evolution that are computationally efficient and make minimal
assumptions on the data source, but are still accurate and even interpretable.

The idea behind our approach is simply that it takes time for influences to
propagate across space, so we can constrain the search for predictors to a
spatio-temporally local neighborhood at each point.  We combine this with a
novel form of non-parametric smoothing, which infers the prediction (regression
or conditional probability) function by averaging together similar
observations, where ``similarity'' is defined in terms of predictive
consequences, effectively replacing the original geometry of the predictor
variables with a new one, optimized for forecasting.  The combination of these
two tools lets us discover underlying structures, as well make fast and
accurate predictions.

Section \ref{sec:light_cones_and_causal_states} formally defines our prediction
problem and introduces our non-parametric localized approach. Section
\ref{sec:estimating_causal_states} gives the statistical methods to estimate
these optimal predictors from discretely-observed continuous-valued fields.
Section \ref{sec:properties} shows, under weak conditions on the
data-generating process, that our method consistently estimates the predictive
distributions.  Section \ref{sec:simulations} proposes a cross-validation
scheme to choose our control settings, and compares our predictive accuracy to
standard time series techniques.  Finally, Section \ref{sec:discussion}
summarizes this new methodology and discusses future work.  Proofs and 
implementation details can be found in the Supplementary Material.

\section{Local Prediction of Spatio-temporal Fields}
\label{sec:light_cones_and_causal_states}

\subsection{Setting and Notation; Light Cones}

We observe a random field $(\field{X}{r}{t})_{\vecS{r}\in\mathbf{S}, t \in \T}$
in discrete space and time.  The field takes values in a set $\mathcal{X}$,
which may be discrete or continuous.  Space $\mathbf{S}$ is a regular lattice,
equipped with norm $\vnorm{ \vecS{r} }$.  Time $\T$ is taken to be the positive
integers up to $T$.

Suppose that disturbances or influences in the system have a maximum speed of
propagation, $c$. Then the only events which could affect what happens at a
given $\field{}{r}{t}$ are those where $s \leq t$ and $\vnorm{
  \vecS{r}-\vecS{u} } \leq c(t-s)$.  Since this set grows as $s$ recedes into
the past, we call this the {\bf past light cone} (PLC) of $\field{}{r}{t}$. The
{\bf future light cone} (FLC) are all events which could be affected by the
present moment $\field{}{r}{t}$; it thus consists of all those
$\field{}{u}{s}$, where $s > t$ and $\vnorm{ \vecS{r}-\vecS{u} } \leq c(s-t)$.
Light cones look like triangles in $(1+1)D$ fields, and in $(2+1)D$, pyramids (Fig.\
\ref{fig:light_cones}).  Denote the configuration in the past cone of $\field{}{r}{t}$ by
$\field{L^{-}}{r}{t}$:
\begin{equation}
  \label{eq:PLC}
  \field{L^{-}}{r}{t} = \left\{ \field{X}{u}{s} \mid s \leq t, \vnorm{ \vecS{r}-\vecS{u} } \leq c(t-s)\right\}
\end{equation}
$\field{L^{+}}{r}{t}$ is, similarly, the configuration in the future cone.

The spatio-temporal prediction problem is thus: use the configuration of the
past cone, $\field{L^{-}}{r}{t}$, to forecast the configuration of the future
cone, $\field{L^{+}}{r}{t}$.  Light-cone prediction compromises between
capturing global patterns and needing only local information.  We will
construct optimal predictors for light cones presently.  Light cones can be
defined for spatial extended patches of points.  (When the ``patch'' becomes
the whole spatial lattice, we are back to global prediction.)  This leads to a
parallel theory of prediction, but it turns out that the predictive state of a
patch is determined by the predictive states of its points \citepmain[\S 3.3,
Lemma 2 and Theorem 3]{CRS-prediction-on-networks}, so we lose no information,
and gain tractability, by not considering cones for patches.

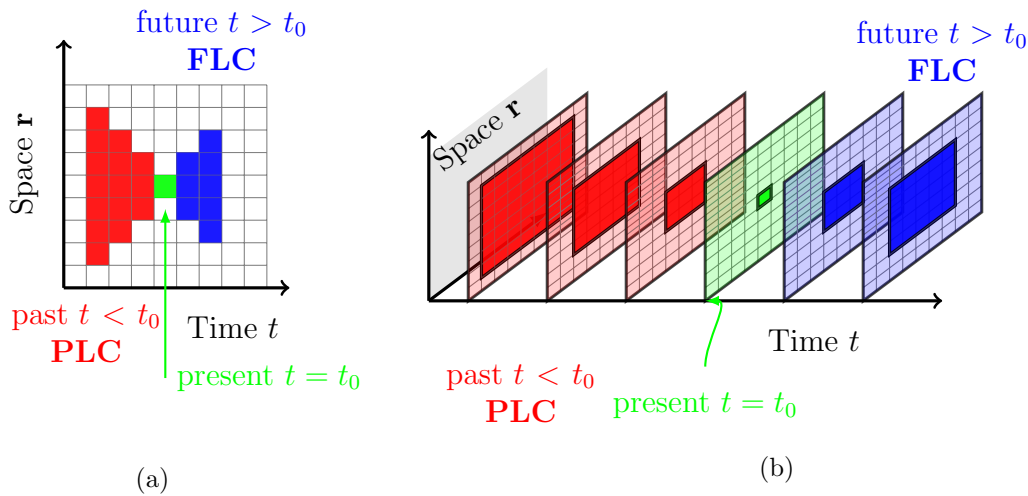
\begin{figure}[!t]
\subfloat[][]{\label{fig:2D_light_cone_tikz_input} 
\begin{minipage}{0.3\textwidth}

\begin{tikzpicture}[scale=.3,every node/.style={minimum size=1cm},on grid,rotate=-90]
		

    \begin{scope}[
    	yshift=0,every node/.append style={
    	    yslant=0,xslant=0},yslant=0,xslant=0
    	             ]
        \draw[step=1cm, \gridColor] (0,0) grid (9,9);
        \fill[green,fill opacity=\opacLC] (4,4) rectangle (5,5);
        \foreach \x in{0, 1, 2}
            \fill[red, fill opacity=\opacLC] (3-\x,3-\x) rectangle (6+\x,4-\x);
        \foreach \x in{0, 1, 2}
            \draw[step=1cm, \gridColor, opacity = \opacGrid]  (3-\x,3-\x) grid (6+\x,4-\x);
        \foreach \x in{0, 1}
			\fill[blue, fill opacity=\opacLC] (3-\x,5+\x) rectangle (6+\x,6+\x);
		\foreach \x in{0, 1}
            \draw[step=1cm, \gridColor, opacity = \opacGrid] (3-\x,5+\x) grid (6+\x,6+\x);
            
    \end{scope}

	\draw[color=black,->, very thick] (9,0,0) -- node [sloped, below, near end, rotate =-90] {Time $t$}  (9,10,0);
	\draw[color=black,->, very thick] (9,0,0) -- node [sloped, above, rotate = 90] {Space $\mathbf{r}$}  (-2,0,0);

    

    \draw[-latex,thick, green] (13,4.5) node[right]{present $t=t_0$} to[out=180,in=0] (5.5,4.5);



	\draw (11,1) node [text width=2.5cm, text centered, red]{past $t <t_0$ \\ \textbf{PLC}};
	\draw (-2, 7) node [text width=2.5cm, text centered, blue]{future $t>t_0$ \\ \textbf{FLC}};

\end{tikzpicture}
\end{minipage}
}%
\hspace{0.05\textwidth}
\subfloat[][]{\label{fig:3D_light_cone_tikz_input}
\begin{minipage}{0.65\textwidth}

\begin{tikzpicture}[scale=.35,every node/.style={minimum size=1cm},on grid,rotate=-90]
		

    \begin{scope}[
    	yshift=-3.5*\gridSepCm,every node/.append style={
    	    yslant=0,xslant=0},yslant=0,xslant=-0.75]
        \fill[black, fill opacity = 0.1] (-1,0) rectangle (4.5,4.5);
        \draw[-latex,very thick](4.5, 0)node[right, near end]{} to (4.5,4.5);
    \end{scope}

    \begin{scope}[
    	yshift=-3*\gridSepCm,every node/.append style={
    	    yslant=0,xslant=0},yslant=0,xslant=-0.75]
        \draw[step=5mm, \gridColor] (0,0) grid (4.5,4.5);
        \fill[red,fill opacity=\opacLC] (0.5,0.5) rectangle (4,4);
        \draw[black,very thick] (0.5,0.5) rectangle (4,4);
        \draw[step=5mm, \gridColor, opacity = \opacGrid] (0.5,0.5) grid (4,4);
        \draw[black, very thick, opacity = \opacRectBoundary] (0,0) rectangle (4.5,4.5);
        \fill[red, fill opacity = \opacBackground]  (0,0) rectangle (4.5,4.5);
    \end{scope}

    \begin{scope}[
    	yshift=-2*\gridSepCm,every node/.append style={
    	    yslant=0.5,xslant=-1},yslant=0,xslant=-0.75]
    	\draw[step=5mm, \gridColor] (0,0) grid (4.5,4.5);
        \fill[red,fill opacity=\opacLC] (1,1) rectangle (3.5,3.5);
        \draw[black,very thick] (1,1) rectangle (3.5,3.5);
        \draw[step=5mm, \gridColor, opacity = \opacGrid] (1,1) rectangle (3.5,3.5);
        \draw[black,very thick, opacity = \opacRectBoundary] (0,0) rectangle (4.5,4.5);
        \fill[red, fill opacity = \opacBackground]  (0,0) rectangle (4.5,4.5);
    \end{scope}

    \begin{scope}[
    	yshift=-\gridSepCm,every node/.append style={
    	    yslant=0.5,xslant=-1},yslant=0,xslant=-0.75]
        \draw[step=5mm, \gridColor] (0,0) grid (4.5,4.5);
        \fill[red,fill opacity=\opacLC] (1.5,1.5) rectangle (3,3);
        \draw[black,very thick] (1.5,1.5) rectangle (3,3);
        \draw[step=5mm, \gridColor, opacity = \opacGrid] (1.5,1.5) rectangle (3,3);
        \draw[black,very thick, opacity = \opacRectBoundary] (0,0) rectangle (4.5,4.5);
        \fill[red, fill opacity = \opacBackground]  (0,0) rectangle (4.5,4.5);
    \end{scope}

    \begin{scope}[
    	yshift=0,every node/.append style={
    	    yslant=0.5,xslant=-1},yslant=0,xslant=-0.75]
        \draw[step=5mm, \gridColor] (0,0) grid (4.5,4.5);
        \fill[green,fill opacity=\opacLC] (2,2) rectangle (2.5,2.5);
        \draw[black,very thick] (2,2) rectangle (2.5,2.5);
        \draw[step=5mm, \gridColor, opacity = \opacGrid] (2,2) rectangle (2.5,2.5);
        \draw[black,very thick, opacity = \opacRectBoundary] (0,0) rectangle (4.5,4.5);
        \fill[green, fill opacity = \opacBackground]  (0,0) rectangle (4.5,4.5);
    \end{scope}

    \begin{scope}[
    	yshift=\gridSepCm,every node/.append style={
    	    yslant=0.5,xslant=-1},yslant=0,xslant=-0.75]
    	
        \draw[step=5mm, \gridColor] (0,0) grid (4.5,4.5);
        \fill[blue,fill opacity=\opacLC] (1.5,1.5) rectangle (3,3);
        \draw[black,very thick] (1.5,1.5) rectangle (3,3);
        \draw[step=5mm, \gridColor, opacity = \opacGrid] (1.5,1.5) rectangle (3,3);
        \draw[black,very thick, opacity = \opacRectBoundary] (0,0) rectangle (4.5,4.5);
        \fill[blue, fill opacity = \opacBackground] (0,0) rectangle (4.5,4.5);
    \end{scope}

    \begin{scope}[
    	yshift=2*\gridSepCm,every node/.append style={
    	    yslant=0.5,xslant=-1},yslant=0,xslant=-0.75]
        \draw[step=5mm, \gridColor] (0,0) grid (4.5,4.5);
        \fill[blue,fill opacity=\opacLC] (1,1) rectangle (3.5,3.5);
        \draw[black,very thick] (1,1) rectangle (3.5,3.5);
        \draw[step=5mm, \gridColor, opacity = \opacGrid] (1,1) rectangle (3.5,3.5);
        \draw[black,very thick, opacity = \opacRectBoundary] (0,0) rectangle (4.5,4.5);
        \fill[blue, fill opacity = \opacBackground] (0,0) rectangle (4.5,4.5);
    \end{scope}

	\draw[color=black,->, very thick] (4.5,-3.5*\gridSep,0) -- node [sloped, below, near end, rotate =-90] {Time $t$}  (4.5,3*\gridSep,0);
	\draw[color=black,->, very thick] (4.5,-3.5*\gridSep,0) -- node [sloped, right, near end, rotate = 37] {Space $\mathbf{r}$}  (-2,-3.5*\gridSep,0);

    

    \draw[-latex,thick, green] (7,0) node[below]{present $t=t_0$} to[out=180,in=90] (4.5,0);



	\draw (8,-7) node [text width=2.5cm, text centered, red]{past $t <t_0$ \\ \textbf{PLC}};
	\draw (-5, 9) node [text width=2.5cm, text centered, blue]{future $t>t_0$ \\ \textbf{FLC}};

\end{tikzpicture}
\end{minipage}
}%
\caption{\label{fig:light_cones} Past (red) and future (blue) light cones in a
  $(1+1)D$ (a) and $(2+1)D$ (b) system.  Here $c$, the velocity of signal propagation, is set to
  $1$.  The past cone is truncated at a horizon of $h_p=3$ steps, while the
  future cone's horizon is only $h_f=2$.  Whether the present (green) is
  included in the past or the future cone is a matter of convention; see
  Section \ref{sec:simulations}.}
\end{figure}

Computationally, we need to truncate the cones at a finite number of time steps
--- we will call these the {\em past horizon} $h_p$ of $L^-$, and likewise the
{\em future horizon} $h_f$ of $L^+$.  Doing this reduces $L^+$ and $L^-$ to
finite-dimensional random vectors.  (For instance, in Fig.\
\ref{fig:light_cones}, with $h_p=3$ and $c=1$, $\field{\ell^{-}}{r}{t}$ has 15
degrees of freedom.)  The horizons are control settings, and may be tuned
through (for example) cross-validation (\S \ref{sec:CV}).  Similarly, when the
maximum speed of propagation $c$ is not given from background knowledge, it is
also a control setting.

\subsection{Predictive States}

To predict the future $\field{L^{+}}{r}{t}$ from a particular past
configuration, say $\ell^{-}$, requires knowing the conditional distribution
\begin{equation}
  \label{eq:conditional_predictive_distribution}
  \Prob{\field{L^{+}}{r}{t} \mid \field{L^{-}}{r}{t} = \ell^{-}}
\end{equation}
for all $\ell^{-}$.  (Subsequently $\field{}{r}{t}$ may be omitted for
readability.)  Since treating this conditional distribution as an arbitrary
function of $\ell^-$ is not feasible statistically or computationally, we try
to find a {\em sufficient statistic} $\eta$ of past configurations that keeps
the predictive information:
\begin{equation}
  \Prob{ \field{L^{+}}{r}{t} \mid \field{H}{r}{t} = \eta(\ell^{-}) } =  \Prob{ \field{L^{+}}{r}{t} \mid \field{L^{-}}{r}{t} = \ell^{-}} ~.
\end{equation}
There are usually many sufficient statistics $\eta, \eta^{\prime},\ldots $.
When $\eta$ and $\eta^{\prime}$ are both sufficient, but $\eta(\ell^{-}) =
f(\eta^{\prime}(\ell^{-}))$ for some $f$, then $\eta$ is a smaller, more
compressed, summary of the data than $\eta^{\prime}$, and so the former is
preferred by Occam's Razor.  The minimal sufficient statistic $\epsilon$
compresses the data as much as can be done without losing any predictive power,
retaining only what is needed for optimal predictions.

We now construct the minimal sufficient statistic, following
\citetmain{CRS-prediction-on-networks}, to which we refer for some mathematical
details.

\begin{definition}[Equivalent configurations]
  \label{def:equivalent_configurations}
  The past configurations $\ell_i^{-}$ at $\field{}{r}{t}$ and $\ell_j^{-}$ at
  $\field{}{u}{s}$ are {\em predictively equivalent}, $(\ell_i^{-},
  \field{}{r}{t}) \sim (\ell_j^{-}, \field{u}{s})$, if they predict the same
  future with equal probabilities, i.e.\ if
  \begin{equation}
    \Prob{\field{L^{+}}{r}{t} \mid \field{L^{-}}{r}{t} = \ell_i^{-} } = \Prob{ \field{L^{+}}{u}{s}  \mid \field{L^{-}}{u}{s} = \ell_j^{-} } 
  \end{equation}
\end{definition}

Let $\left[ (\ell^{-}, \field{}{r}{t}) \right]$ be the equivalence class of
$(\ell^{-}, \field{}{r}{t})$, i.e., the set of all past configurations and
coordinates that predict the same future as $\ell^{-}$ does at
$\field{}{r}{t}$.  Let
\begin{equation}
  \label{eq:epsilon}
  \epsilon(\ell^{-},\field{}{r}{t}) \equiv \left[ \ell^{-} \right]
\end{equation} 
be the function mapping each $(\ell^{-}, \field{}{r}{t})$ to its predictive
equivalence class.  The values $\epsilon$ can take are the {\em predictive
  states}; they are the minimal statistics which are sufficient for predicting
$L^+$ from $L^-$ \citepmain{CRS-prediction-on-networks}.

Since each predictive state has a unique predictive distribution and vice
versa.  We will thus slightly abuse notation to denote by $\mathcal{E}$ both
the set of equivalence classes and the set of predictive distributions, whose
elements we will write $\epsilon_j$.  We will further abuse notation by writing
the mapping from past cone configurations to predictive distributions as
$\epsilon(\cdot)$, leading to the measure-valued random field
\begin{equation}
  \label{eq:measure_valued_field}
  \field{S}{r}{t} := \epsilon\left( \field{L^{-}}{r}{t} \right) ~.
\end{equation}
One can show \citepmain{CRS-prediction-on-networks} that $\field{S}{r}{t}$ is
Markov even if $\field{X}{r}{t}$ is not.  However, $X$ is not an ordinary
hidden Markov random field, since there is an unusual {\em deterministic}
dependence between transitions in $S$ and the realization of $X$, analogous to
that of a chain with complete connections
\citepmain{Fernandez-Maillard-chains-with-complete-connections}.

To be able to draw useful inferences from a single realization of the process,
we must assume some form of homogeneity or invariance of the conditional
distributions.

\begin{assumption}[Conditional invariance]
  The predictive distribution of a PLC configuration $\ell^{-}$ does not change
  over time or space.  That is, for all $\vecS{r}, t$, all $\vecS{u}, s$, and
  all past light-cone configurations $\ell^{-}$,
  \begin{equation}
    (\ell^{-},\field{}{r}{t}) \sim (\ell^{-},\field{}{u}{s})
  \end{equation}
  \label{ass:st-invariance}
  We may thus regard $\sim$ as an equivalence relation among PLC
  configurations, and $\epsilon$ as a function over $\ell^{-}$ alone.
\end{assumption}

This is just {\em conditional} invariance, like the conditional stationarity
for time series used in
\citetmain{Caires-Ferreira-prediction-of-cond-stat-seq}.  It would be implied
by the field being a Markov random field with homogeneous transitions, or of
course by full stationarity and spatial invariance, but it is weaker.
Assumption \ref{ass:st-invariance} lets us talk about {\em the} predictive
distribution of a PLC configuration, regardless of when or where it was
observed, and to draw inferences by pooling such observations.  If this
assumption fails, we could in principle still learn a different set of
predictive states for each moment of time and/or each point of space (as in
\citetmain{CRS-prediction-on-networks}), but this would need data from multiple
realizations of the same process.

\section{Estimating Predictive States}
\label{sec:estimating_causal_states}

We extend the work of \citetmain{CRS-prediction-on-networks,QSO-in-PRL} to
continuous-valued fields, introducing statistical methods to estimate and
predict non-linear dynamics accurately and efficiently, while still obtaining
insight into the spatio-temporal structure.  Algorithmic details are given in
the Supplementary Material.

Assume we have $T$ consecutive measurements of the field $\field{X}{r}{t}$,
observed over the lattice $\vecS{S}$, with $N = \card{\vecS{S}} \cdot T$
space-time coordinates $\field{}{r}{t}$ in all. Each one of these $N$
point-instants has a past and a future light-cone configuration,
$\field{\ell^{-}}{r}{t}$ and $\field{\ell^{+}}{r}{t}$, represented as,
respectively, $n_p$ and $n_f$ dimensional vectors.  Since predictive states are
sets of PLC configurations with the same predictive distribution, we need to
test this sameness, based on conditional samples $\lbrace \ell^{+} \mid
\ell_i^{-} \rbrace_{i=1}^{N}$ from the observed field.  We will apply
non-parametric two-sample tests for $H_0 ~: ~\Prob{L^{+} \mid L^{-} =
  \ell_i^{j} } = \Prob{L^{+} \mid L^{-} = \ell_i^{-}}$ pairwise for all $i$ and
$j$.  Because there are typically a great many past light cones (one for each
point-instant), and light-cone configurations are themselves high-dimensional
objects, we generally must do this step-wise.

\subsection{Partitioning PLC Configurations: Similar Pasts Have Similar
  Futures}
\label{sec:partition}

It is often reasonable to assume that the mapping from the past to predictive
distributions is regular, so that if two historical configurations are close
(in some suitable metric), then their predictive distributions are also close.
This lets us avoid having to do some pairwise tests, as their results can be
deduced from others.

\begin{assumption}[Continuous histories]
  \label{ass:continuous_dynamics}
  For every $\rho > 0$, there exists a $\delta > 0$ such that 
  \begin{equation}
    \label{eq:cont_assumption}
    \vnorm{ \ell_i^{-} - \ell_j^{-} } < \delta \Rightarrow \KL{\Prob{L^{+} \mid \ell_i^{-}}}{\Prob{L^{+} \mid \ell_j^{-}}} < \rho,
  \end{equation}
  where $\KL{p}{q}$ is the Kullback-Leibler divergence between distributions
  $p$ and $q$ \citepmain{Kullback-info-theory-and-stats}.
\end{assumption}

Assumption \ref{ass:continuous_dynamics} requires that sufficiently small
changes ($< \delta$) in the local past make only negligible ($ < \rho$) changes
to the distribution of local future outcomes.  Statistically, such
smoothness-in-distribution lets us pool observations from highly similar PLC
configurations, enhancing efficiency; physically, it reflects the smoothness of
reasonable dynamical mechanisms.  Chaotic systems, where the exact trajectory
depends sensitively on initial conditions, do not present difficulties, since
Assumption \ref{ass:continuous_dynamics} is about the conditional {\em
  distribution} of the future given partial information on the past, and chaos
has long been recognized as a way to stabilize such distributions, forming the
basis for prediction and control of chaos \citepmain{Kantz-Schreiber-2nd}.

We use Assumption \ref{ass:continuous_dynamics} to justify an initial
``pre-clustering'' of the PLC configuration space, greatly reducing
computational cost with little damage to predictions.  We first divide the PLC
configuration space using fast clustering algorithms into $K \ll N$ clusters,
and then test equality of distributions between clusters ($\mathcal{O}(K^2)$),
rather than light cones ($\mathcal{O}(N^2)$).

When $N$ is small enough, we can skip this initial pre-clustering.  To simplify
exposition, we treat this as assigning each distinct past cone to its own
cluster.

\subsection{Partitioning Clusters into Predictive States}
\label{sec:testing_equality}

Each cluster $P_k$ contains a set of similar PLC configurations, and also
defines a sample of conditional FLCs, $\mathbf{F}_k(\delta) = \lbrace
\ell^{+}_j \mid \ell^{-}_j \in P_k \rbrace \in \R^{N_k \times n_f}$, $k = 1,
\ldots, K$. Since all $\ell_j^{-} \in P_k$ have very similar distribution,
$\mathbf{F}_k \sim Q$ is an approximate sample from the predictive distribution
$\Prob{ \field{L^{+}}{r}{t} \mid \ell^{-} \in P_k}$.  Lemma
\ref{lem:F_i_is_sample_from_p_i}, below, shows that for sufficiently small
$\delta$, $\mathbf{F}_{k_i}(\delta)$ is an exact sample of
$p(\epsilon(P_{k_i}))$.  Thus, to simplify the exposition, we ignore the $\rho$
difference in this section.

Thus, finding equivalent clusters reduces to testing hypotheses of the form
$H_0: p_{k_i} = p_{k_j}$ based on the two samples $\mathbf{F}_{k_i}(\delta)$
and $\mathbf{F}_{k_j}(\delta)$.  For $h_f = 0$ and $c = 1$, FLCs are
one-dimensional and we can use a Kolmogorov-Smirnov test (or any other
two-sample univariate test). In general, however, $\mathbf{F}_k$ are samples
from a very high-dimensional distribution, and we use non-parametric,
multivariate, two-sample tests \citepmain[see
e.g.][]{Rosenbaum05_ExactTwosampleTest,
  RizzoSzekely10_DISCO,Grettonetal07_twosample_kernel}.  Any test satisfying
Assumption \ref{ass:existence_consistent} could be used.

To estimate the predictive states from an initial partitioning of PLC
configurations, we iterate through the list of configurations, recursively
testing equality of distributions.  To initialize the algorithm, create the
first predictive state $\epsilon_1$, containing the first configuration
$\ell_1^{-}$. Then take $\ell^{-}_{2}$ and test if its distribution is equal to
that of $\epsilon_1$.  If it is (at the level $\alpha$), then put
$\ell_{2}^{-}$ in $\epsilon_1$; otherwise generate a new predictive state
$\epsilon_2$ with $\ell_{2}^{-}$.  Then test the next configuration against all
previously established predictive states and proceed as before.  This continues
until all configurations have been assigned to a predictive state.

The predictive distribution of each predictive state can be found by applying
any consistent non-parametric density estimator to the future cone samples
belonging to that state. If we only want point forecasts, we can skip
estimating the whole predictive distribution and just get (e.g.)  the mean of
the samples.

\section{Consistency}
\label{sec:properties}

LICORS consistently recovers the correct assignment of past cone configurations
to predictive states, and the predictive distributions, under weak assumptions
on the data-generating process.  These allow for the number of predictive
states to grow slowly with the sample size, so that we have non-parametric
consistency.  We give all assumptions and lemmas in the main text; proofs are
in the Supplementary Material.

\subsection{Assumptions}
\label{sec:notation_assumptions}

Let $N = \card{\vecS{S} \times \T}$ be the total number of space-time points at
which we observe both the past and future light cone.  We presume that 
$N \rightarrow \infty$, without caring whether $\card{\vecS{S}} \rightarrow \infty$, $\card{\T} \rightarrow \infty$, or both.


\begin{assumption}[Slowly growing number of predictive states]
\label{ass:number_of_states}
The number of predictive states, $\card{\mathcal{E}} = m(N) = \littleO{N}$, and
always $ \leq N$.
\end{assumption}

Assumption \ref{ass:number_of_states} only guarantees that \emph{at least} one
of the predictive states grows in size.  To bound testing error probabilities,
the number of light cones seen in \emph{every} state must grow as $N$ grows.

\begin{assumption}[Increasing number of light cones in each state]
\label{ass:number_of_samples_in_state}
The number of light cones in each state, $N_j := \card{\epsilon_j}$, grows with
$N$: for all $\epsilon_j \in \mathcal{E}$,
\begin{equation}
\lim_{N \rightarrow \infty}{N_j(N)} = \infty
\end{equation}  
\end{assumption}

Let $N_{\min} = \min_j N_j$ be the number of samples in the smallest predictive
state; thus also $N_{\min} \rightarrow \infty$ for $N \rightarrow
\infty$. Assumption \ref{ass:number_of_samples_in_state} means that the system
re-visits each predictive state as it evolves, i.e., all states are recurrent.
This lets us learn the predictive distribution of each state, from a growing
sample of its behavior.

\begin{assumption}[Bounded conditional distributions] 
\label{ass:bounded_distributions}
All predictive distributions $\epsilon_j \in \mathcal{E}$ have densities with
respect to a common reference measure $\nu$, and $0 < \iota < d\epsilon_j/d\nu <
\kappa <\infty$, for some constants $\iota$ and $\kappa$.
\end{assumption}

This merely technical assumption guarantees bounded likelihood ratios.

\begin{assumption}[Distinguishable predictive states]
\label{ass:distance_between_states}
The KL divergence between states is bounded from below: $\forall i \neq j$,
\begin{equation}
\label{eq:bounded_KL}
0 < d_{\min} \leq \KL{\epsilon_i}{\epsilon_j} =: d_{i,j}
\end{equation}
\end{assumption}

We do not need $d_{i,j} < \infty$.  (In fact, $\KL{\epsilon_i}{\epsilon_j} =
\infty$ is helpful.)  \eqref{eq:bounded_KL} is automatically satisfied for any
fixed number of states.  For an increasing state space, $m = m(N)$, assume
\begin{equation}
\label{eq:inf_d_geq_0}
\inf_{i, j \in m(N)} d_{i,j} = d_{\min} > 0 \text { for } N \rightarrow \infty.
\end{equation}

\begin{lemma}[Conditionally independent FLCs]
  \label{lem:conditional_independent_FLC_given_states}
  If the cones
  $\field{L^+}{r}{t}$ and $\field{L^+}{u}{s}$ do not overlap, then
  \begin{equation}
    \label{eq:conditional_independent_FLC_given_states}
    \field{L^+}{r}{t} \indep \field{L^+}{u}{s} \mid \field{S}{r}{t}, \field{S}{u}{s}.
  \end{equation}
  In particular,
  \begin{equation}
    \label{eq:conditional_independent_FLC_given_states_distributions}
    \Prob{\field{L^+}{r}{t}, \field{L^+}{u}{s} \mid \field{S}{r}{t}, \field{S}{u}{s}} = \Prob{\field{L^+}{r}{t}  \mid \field{S}{r}{t}} \Prob{\field{L^+}{u}{s} \mid  \field{S}{u}{s}} ~.
  \end{equation}
\end{lemma}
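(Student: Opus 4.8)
The plan is to prove the slightly stronger statement that, when the two cones do not overlap,
\[
\field{L^+}{r}{t} \indep \bigl( \field{S}{u}{s}, \field{L^+}{u}{s} \bigr) \mid \field{S}{r}{t},
\]
together with its mirror image obtained by swapping $(\vecS r,t)$ and $(\vecS u,s)$, and then to read off both displays of the lemma from the chain rule:
\begin{align*}
&\Prob{\field{L^+}{r}{t}, \field{L^+}{u}{s} \mid \field{S}{r}{t}, \field{S}{u}{s}} \\
&\quad = \Prob{\field{L^+}{r}{t} \mid \field{L^+}{u}{s}, \field{S}{r}{t}, \field{S}{u}{s}} \, \Prob{\field{L^+}{u}{s} \mid \field{S}{r}{t}, \field{S}{u}{s}} \\
&\quad = \Prob{\field{L^+}{r}{t} \mid \field{S}{r}{t}} \, \Prob{\field{L^+}{u}{s} \mid \field{S}{u}{s}},
\end{align*}
where the first factor collapses by the stronger statement applied at $(\vecS r,t)$ and the second collapses by the (weaker) mirror claim applied at $(\vecS u,s)$. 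Equation \eqref{eq:conditional_independent_FLC_given_states_distributions} is then immediate, and \eqref{eq:conditional_independent_FLC_given_states} is just its restatement as a conditional independence.

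First I would isolate the purely geometric content of the hypothesis. If $\field{L^+}{r}{t}$ and $\field{L^+}{u}{s}$ do not overlap, then no point-instant of one lies in the past light cone of any point-instant of the other. Indeed, if $(\vecS v,q) \in \field{L^+}{r}{t}$, $(\vecS w,p) \in \field{L^+}{u}{s}$ and $(\vecS v,q) \in \field{L^-}{w}{p}$, then $q \le p$ and $\vnorm{\vecS v - \vecS w} \le c(p-q)$, so by the triangle inequality $\vnorm{\vecS w - \vecS r} \le \vnorm{\vecS w - \vecS v} + \vnorm{\vecS v - \vecS r} \le c(p-q) + c(q-t) = c(p-t)$; since $p \ge q > t$, this places $(\vecS w,p)$ in $\field{L^+}{r}{t}$ as well, contradicting non-overlap. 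Hence there is no directed chain of influences running from either future cone into the other, so the only possible source of dependence between the two futures is a common cause in their shared causal past.

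The crux, then, is to show that conditioning on the two predictive states screens off that common cause. Here I would invoke the characterization of $\epsilon(\cdot)$ as the minimal sufficient statistic and the global Markov structure of the causal-state field $\field{S}{r}{t}$ from \citetmain[\S 3.3, Lemma 2 and Theorem 3]{CRS-prediction-on-networks}: the state at a point-instant renders its future light cone conditionally independent not merely of its own past light cone, but of the configuration on any set of point-instants that can reach that future cone only through the point's own causal past. Assumption \ref{ass:st-invariance} keeps this statement homogeneous in $(\vecS r,t)$. Combined with the geometric step — everything $\field{S}{u}{s}$ and $\field{L^+}{u}{s}$ depend on reaches $\field{L^+}{r}{t}$ only via $\field{L^-}{r}{t}$, hence only via $\field{S}{r}{t}$ — this yields the stronger statement, and its mirror image follows identically.

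I expect the screening step to be the main obstacle: one must rule out residual dependence transmitted through ancestors of $\field{L^+}{r}{t}$ that lie outside $\field{L^-}{r}{t}$, i.e.\ point-instants that genuinely influence the future cone but are not summarized by $\field{S}{r}{t}$. The argument must exploit that such ancestors are themselves driven by pasts which, by the same cone geometry, reach $\field{L^+}{u}{s}$ only through that cone's own causal history, so the induced dependence is already mediated by $\field{S}{r}{t}$ and $\field{S}{u}{s}$; making this precise is exactly where the full causal-state construction of \citetmain{CRS-prediction-on-networks} is needed, rather than the bare sufficiency identity $\Prob{\field{L^+}{r}{t} \mid \field{L^-}{r}{t}} = \Prob{\field{L^+}{r}{t} \mid \field{S}{r}{t}}$. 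A secondary, routine point worth noting is that two future cones emanating from a single time slice can be disjoint only after truncation at a finite horizon $h_f$, so the lemma is vacuous in the untruncated $(1+1)D$ setting but informative in general.
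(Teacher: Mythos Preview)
Your approach is essentially the same as the paper's: the chain-rule factorization you write out matches the paper's three-line display verbatim, and the two screening-off claims you isolate are precisely what the paper asserts in its one-sentence justification of the second and third equalities. Your proposal is considerably more careful than the paper's own proof about the geometric content of non-overlap and about why the full causal-state construction (not just bare sufficiency) is needed to drop $\field{L^+}{u}{s}$ from the conditioning; the paper simply states that ``given the predictive state at $\field{}{u}{s}$ the distribution of $L^+$ is independent of the predictive state at another $\field{}{r}{t}$'' and that non-overlap handles the rest.
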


\begin{corollary}
  \label{cor:univariate_FLCs_conditional_independent}
  If $h_f = 0$, then FLCs are conditionally independent given their predictive
  state.
\end{corollary}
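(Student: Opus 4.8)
The plan is to reduce the corollary to Lemma~\ref{lem:conditional_independent_FLC_given_states} by verifying that the lemma's hypothesis --- non-overlapping future light cones --- is automatically satisfied once $h_f = 0$. First I would specialize the definition of the future cone to horizon $h_f = 0$: the defining constraint $\vnorm{\vecS{r}-\vecS{u}} \le c(s-t)$ together with $0 \le s-t \le h_f = 0$ (under the convention that the present belongs to the FLC) forces $s = t$ and $\vecS{u} = \vecS{r}$, so $\field{L^+}{r}{t}$ degenerates to the single point-instant at $\field{}{r}{t}$ and is therefore a one-dimensional random variable.

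Next I would note that for any two distinct point-instants $(\vecS{r},t) \neq (\vecS{u},s)$ the two degenerate cones are supported on different lattice sites, hence are disjoint and in particular do not overlap. Applying Lemma~\ref{lem:conditional_independent_FLC_given_states} to this pair gives
\[
  \field{L^+}{r}{t} \indep \field{L^+}{u}{s} \mid \field{S}{r}{t}, \field{S}{u}{s},
\]
equivalently the factorization \eqref{eq:conditional_independent_FLC_given_states_distributions}; since $\field{S}{r}{t} = \epsilon(\field{L^-}{r}{t})$ is the predictive state at $\field{}{r}{t}$, this is exactly the assertion that the (now univariate) FLCs are conditionally independent given their predictive states.

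I do not expect a real obstacle here: the argument is essentially bookkeeping. The one place to be slightly careful is the present-site convention flagged in the caption of Figure~\ref{fig:light_cones} --- whether the present counts toward $L^+$ or $L^-$ changes which single site the $h_f=0$ cone reduces to, but not the disjointness of cones at distinct point-instants, so the conclusion is unaffected (and if one adopts the convention that puts the present into $L^-$, the $h_f=0$ FLC is empty and the statement is vacuously true). A second trivial check is that ``conditionally independent given their predictive state'' in the corollary means conditioning on the pair $(\field{S}{r}{t},\field{S}{u}{s})$, which matches the conditioning set in the lemma verbatim.
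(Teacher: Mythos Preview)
Your proposal is correct and matches the paper's own proof essentially verbatim: the paper observes that with $h_f=0$ each FLC is the single point $\field{X}{r}{t}$, so two distinct FLCs cannot overlap, and then invokes Lemma~\ref{lem:conditional_independent_FLC_given_states}. Your additional remarks on the present-site convention and the meaning of the conditioning set are sound but go beyond what the paper bothers to spell out.
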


\subsubsection{Getting samples from $\epsilon_i$} 
\label{sec:getting_samples}

We get a sample of FLCs from the predictive distribution of $\ell_i$ by first
taking all PLCs in a $\delta$-neighborhood around $\ell_i$,
\begin{equation}
  \label{eq:indices_PLC_neighborhood}
  I_i(\delta) = \lbrace j \mid \vnorm{ \ell_i^{-} - \ell_j^{-} } < \delta \rbrace.
\end{equation}
For later use, we denote by $S_i(N, \delta) = \card{I_i(\delta)}$ the number of
such light cones.  By Assumption \ref{ass:continuous_dynamics}, we get our
sample from $\epsilon_i$ by collecting the corresponding future cone
configurations:
\begin{equation}
  \label{eq:samples_F_i}
  \mathbf{F}_i(\delta) = \lbrace \ell_j^{+} \mid j \in I_i(\delta) \rbrace ,
\end{equation}

\begin{lemma}
  \label{lem:F_i_is_sample_from_p_i}
  For sufficiently small $\delta > 0$, all past configurations in $I_i(\delta)$
  are predictively equivalent: $\forall j, k \in I_i(\delta), \ell^-_j \sim
  \ell^-_k ~$.  Consequently, all $\ell^+_j$, $j \in I_i(\delta)$, are drawn
  from the same distribution $\epsilon(\ell^-_i)$.
\end{lemma}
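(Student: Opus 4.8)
The plan is to play Assumption \ref{ass:continuous_dynamics} against Assumption \ref{ass:distance_between_states}. Continuity of histories says that shrinking $\delta$ makes the predictive distributions of all pasts in the $\delta$-ball around $\ell_i^-$ arbitrarily close, in KL, to that of $\ell_i^-$; distinguishability says distinct predictive states are KL-separated by at least $d_{\min} > 0$. So once $\delta$ is small enough to push that KL below $d_{\min}$, no past in the ball can lie in any state other than $\epsilon(\ell_i^-)$, and they must all share its predictive distribution. The point to watch is that KL is neither symmetric nor subadditive, so one cannot simply chain a small $\KL{\epsilon(\ell_i^-)}{\epsilon(\ell_j^-)}$ and a small $\KL{\epsilon(\ell_i^-)}{\epsilon(\ell_k^-)}$ into a small $\KL{\epsilon(\ell_j^-)}{\epsilon(\ell_k^-)}$; I would instead use the continuity bound only to exclude a jump of size $\ge d_{\min}$, and then let transitivity of $\sim$ do the chaining.

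Carrying this out: I would first apply Assumption \ref{ass:continuous_dynamics} with $\rho = d_{\min}$ to obtain a $\delta_0 > 0$ such that $\vnorm{\ell_i^- - \ell_j^-} < \delta_0$ implies $\KL{\Prob{L^+ \mid \ell_i^-}}{\Prob{L^+ \mid \ell_j^-}} < d_{\min}$; this $\delta_0$ depends only on $d_{\min}$ and on the modulus of continuity of the past-to-distribution map, both intrinsic to the data source and, by \eqref{eq:inf_d_geq_0}, not vanishing as $N \to \infty$. Now fix any $\delta \le \delta_0$ and any $j \in I_i(\delta)$. Since every past configuration's predictive distribution is one of the states, we may write $\Prob{L^+ \mid \ell_i^-} = \epsilon(\ell_i^-)$ and $\Prob{L^+ \mid \ell_j^-} = \epsilon(\ell_j^-)$; were these two states distinct, Assumption \ref{ass:distance_between_states} would force $\KL{\epsilon(\ell_i^-)}{\epsilon(\ell_j^-)} \ge d_{\min}$, contradicting the strict bound just obtained. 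Hence $\ell_j^- \sim \ell_i^-$ for every $j \in I_i(\delta)$.

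It then remains to chain and to read off the sampling statement. Under Assumption \ref{ass:st-invariance}, $\sim$ is a bona fide equivalence relation on PLC configurations, so for any $j, k \in I_i(\delta)$ we get $\ell_j^- \sim \ell_i^- \sim \ell_k^-$, hence $\ell_j^- \sim \ell_k^-$ --- the first claim. For the second, by construction the observed future cone $\ell_j^+$ at a point-instant whose past cone is $\ell_j^-$ is a realization of $L^+$ drawn from $\Prob{L^+ \mid \ell_j^-}$, which we have just identified with $\epsilon(\ell_i^-)$; since this holds for all $j \in I_i(\delta)$, every element of $\mathbf{F}_i(\delta)$ is a draw from the common predictive distribution $\epsilon(\ell_i^-)$. (These draws need not be independent, since overlapping future cones are correlated --- but that is the business of Lemma \ref{lem:conditional_independent_FLC_given_states}, not of this lemma, which asserts only equality of their marginals.) The main obstacle, such as it is, is precisely the non-metric nature of KL flagged above; everything else is bookkeeping.
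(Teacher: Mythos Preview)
Your proof is correct and follows essentially the same idea as the paper's: pit the uniform continuity of Assumption~\ref{ass:continuous_dynamics} against the $d_{\min}$ separation of Assumption~\ref{ass:distance_between_states} to force all pasts in the ball into a single predictive state. The only tactical difference is that you route each $\ell_j^-$ through the center $\ell_i^-$ and then invoke transitivity of $\sim$, whereas the paper uses the triangle inequality in configuration space to get $\vnorm{\ell_j^- - \ell_k^-} < 2\delta$ and applies continuity directly to the pair $(\ell_j^-,\ell_k^-)$; both maneuvers serve the same purpose of avoiding any need to chain KL bounds.
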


For finite $N$, it may not be possible in practice to find and use a
sufficiently small $\delta$.  With pre-clustering, for instance, some of the
clusters may have diameters greater than the $\delta$ which guarantees equality
of distribution.  Then the samples $\mathbf{F}_i(\delta)$ are actually from
multiple states.  One could circumvent this by using more clusters, which
generally shrinks cluster diameters, but this would also reduce the number of
samples per neighborhood, increasing the error rate of our two-sample tests.
In practice, then, one must trade off decreasing $\delta$ to discover all
predictive states and keeping a low testing error.

\begin{corollary}
  \label{cor:likelihood_factorizes}
  For sufficiently small $\delta > 0$, and non-overlapping FLCs, all the future
  configurations in $\mathbf{F}_i(\delta)$ are IID samples from
  $\epsilon(\ell^{-}_i)$.
\end{corollary}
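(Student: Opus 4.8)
The plan is to obtain the two halves of the ``IID'' conclusion separately and then glue them together. \emph{Identical distribution} is immediate from Lemma~\ref{lem:F_i_is_sample_from_p_i}: for $\delta$ below the threshold furnished by that lemma, every $\ell^-_j$ with $j \in I_i(\delta)$ lies in the same predictive equivalence class as $\ell^-_i$, so each associated future cone $\ell^+_j$ is a draw from the single predictive distribution $\epsilon(\ell^-_i)$. In particular the common predictive state is the deterministic value $s^\ast := \epsilon(\ell^-_i)$, i.e.\ $\field{S}{r_j}{t_j} = s^\ast$ for all $j \in I_i(\delta)$, where $(\vecS{r_j},t_j)$ is the space-time point carrying the $j$th light cone.

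For \emph{independence}, I would first handle two cones and then induct. Fix $j \neq k$ in $I_i(\delta)$ with non-overlapping future cones. By Lemma~\ref{lem:conditional_independent_FLC_given_states}, $\field{L^+}{r_j}{t_j} \indep \field{L^+}{r_k}{t_k} \mid \field{S}{r_j}{t_j}, \field{S}{r_k}{t_k}$, and by the previous paragraph both conditioning states equal the constant $s^\ast$. Conditioning on a constant is vacuous, so this conditional independence is in fact \emph{unconditional} independence, while the conditional law of each future cone given its state is just the state itself, namely $\epsilon(\ell^-_i)$. Hence $\ell^+_j$ and $\ell^+_k$ are independent and each distributed as $\epsilon(\ell^-_i)$.

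To pass from pairwise to mutual independence over all of $\mathbf{F}_i(\delta)$, I would invoke the multivariate form of Lemma~\ref{lem:conditional_independent_FLC_given_states} --- the same factorization argument applies verbatim to any finite family of pairwise non-overlapping cones, since the relevant $\sigma$-fields are generated by disjoint sets of space-time points and $\field{S}{r}{t}$ is Markov (as noted after \eqref{eq:measure_valued_field}). This gives, for pairwise non-overlapping $\{\field{L^+}{r_j}{t_j}\}_{j \in I_i(\delta)}$,
\begin{equation}
\Prob{\{\field{L^+}{r_j}{t_j}\}_{j \in I_i(\delta)} \;\middle|\; \{\field{S}{r_j}{t_j}\}_{j \in I_i(\delta)}} = \prod_{j \in I_i(\delta)} \Prob{\field{L^+}{r_j}{t_j} \;\middle|\; \field{S}{r_j}{t_j}} .
\end{equation}
Evaluating at $\field{S}{r_j}{t_j} \equiv s^\ast$ turns the right-hand side into a product of copies of $\epsilon(\ell^-_i)$, and --- the conditioning event being deterministic --- turns the left-hand side into the unconditional joint law of the $\ell^+_j$; hence they are mutually independent with common distribution $\epsilon(\ell^-_i)$, which is the claim.

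The main obstacle is precisely this last step: Lemma~\ref{lem:conditional_independent_FLC_given_states} is stated only for a pair of cones, so one must either establish its $n$-fold generalization or otherwise upgrade pairwise to mutual independence. I expect this to be routine given the Markov structure of the state field --- non-overlapping future cones depend on disjoint regions of the underlying field and so are conditionally independent given the states along their bases --- but it is the one place where more than a one-line appeal to earlier results is needed, and some care is required that ``non-overlapping'' is understood to hold for \emph{every} pair of cones in $\mathbf{F}_i(\delta)$, not merely for consecutive ones.
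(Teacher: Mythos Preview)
Your proposal is correct and follows the same route as the paper: the paper's proof is the single line ``Immediate from combining Lemmas~\ref{lem:F_i_is_sample_from_p_i} and~\ref{lem:conditional_independent_FLC_given_states}.'' You have simply unpacked that line, and in doing so you are actually more careful than the paper, which never explicitly addresses the pairwise-versus-mutual independence issue you flag; your argument via the constant conditioning state $s^\ast$ is the natural way to fill that elision.
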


In general, for $h_f > 0$ the FLCs in $\mathbf{F}_i(\delta)$ can be overlapping
and the conditional likelihood does not factorize. Yet, without loss of
generality, we can consider only non-overlapping FLCs.  This is because we can
explicitly exclude overlapping FLCs from $\mathbf{F}_i(\delta)$, at the cost of
reducing the sample size to $\tilde{S}_i(N, \delta) \leq S_i(N, \delta)$.  For
each $\ell_i$, the maximum number of FLCs which we must thereby exclude, say
$w$, is fixed geometrically, by $c$, $h_f$ and the dimension of the space
$\vecS{S}$, and does not grow with $N$.  The exclusion thus is asymptotically
irrelevant, since $\frac{S_i(N, \delta)}{w} \leq \tilde{S}_i(N, \delta) \leq
S_i(N, \delta)$.

Further, note that, at least formally, it's enough to analyze the univariate,
zero-horizon FLC distributions, which rules out overlaps.  This is because
longer-horizon FLC distributions must be consistent with the one-step ahead
distributions and the transition relations of the underlying predictive states.
Thus we could get the $n_f$-dimensional FLC distribution by iteratively
combining the univariate FLC distributions and the predictive state
transitions, i.e., by chaining together one-step-ahead predictions, as in
\citetmain[Corollary 2]{CMPPSS}.

\begin{assumption}[Number of samples from each cone]
  \label{ass:number_of_samples_per_LC}
  For each fixed $\delta > 0$, and each past light cone $\ell_i$, $S_i(N,
  \delta) \asNtoinfty \infty$.
\end{assumption}
For each $\delta$, $S_i(N, \delta)$ is a random variable, and to establish
consistency we need some regularity conditions on how $S_i$ grows with $N$.
Let $S_{\min}(N, \delta) = \min_{j} S_{j}(N, \delta)$ be the smallest number of
samples per $\delta$-neighborhood for each $N$ and $\delta$.

\begin{assumption}
  \label{ass:mgf_S_min}
  For some $\tilde{c} > 0$,
  \begin{equation}
    N \cdot m(N) \cdot \E e^{-\tilde{c} d^2_{\min} S_{\min}(N, \delta)} \asNtoinfty 0 ~.
  \end{equation}
\end{assumption}
Since $\E e^{t S_{\min}(N, \delta)}$ is the moment generating function of
$S_{\min}$, this amounts to asserting that the number of samples concentrates
around its mean while growing, ruling out pathological cases where $S_i(N,
\delta)$ grows to infinity, but concentrates around small values.

\subsection{Unknown Predictive States: Two-sample Problem}
\label{sec:two_sample_problem}

With a finite number $N$ of observations, recovering the states is the same as
determining which past cone configurations are predictively equivalent.  We
represent this with an $N\times N$ binary matrix $\mathbf{A}$, where $A_{ij} =
1$ if and only if $\ell_i \sim \ell_j$.  LICORS gives us an
estimate of this matrix, $\widehat{\mathbf{A}}$, and we will say that
the predictive states can be recovered consistently when
\begin{equation}
  \Prob{\widehat{\mathbf{A}} \neq \mathbf{A}} \asNtoinfty 0 ~.
  \label{eqn:consistent-recovery-of-states}
\end{equation}

Since the predictive distributions are unknown, we use non-parametric
two-sample tests to determine whether two past cone configurations are
predictively equivalent.  While simulations can always be used to approximate
the power of particular tests against particular alternatives, there do not
(yet) seem to be any general expressions for the power of such tests, analogous
to the bounds on likelihood tests in terms of KL divergence
\citepmain{Kullback-info-theory-and-stats}.  Nonetheless, we expect that for $N
\rightarrow \infty$, the probability of error approaches zero, as long as the
true distributions are far enough apart.  We thus make the following
assumption.

\begin{assumption}
  \label{ass:existence_consistent}
  Suppose we have $n$ samples from distribution $p$, and $n^{\prime}$ samples
  from distribution $q$, all IID.  Then there exist a positive constants
  $d_{n,n^{\prime}}$ tending to $0$ as $n,n^{\prime} \rightarrow \infty$, and a
  sequence of tests $T_{n,n^{\prime}}$ of $H_0: p=q$ vs.\ $H_1: p \neq q$ with
  size $\alpha = \littleO{\min(n,n^{\prime})^{-2}}$, and type II error rate
  $\beta(\alpha, n,n^{\prime}) = \littleO{\min(n,n^{\prime})^{-2}}$ so long as
  $p$ and $q$ are mutually absolutely continuous and $\KL{p}{q} \geq
  d_{n,n^{\prime}}$.
\end{assumption}

Note that if the number of predictive states is constant in $N$, we can weaken
the assumption to just a sequence of tests whose type I and type II error
probabilities both go to zero supra-quadratically when $\KL{p}{q} \geq
d_{\min}$.

\begin{theorem}[Consistent predictive state estimation]
  \label{thm:consistent_unknown}
  Under Assumptions \ref{ass:st-invariance}, \ref{ass:continuous_dynamics}, 
  \ref{ass:number_of_states}, \ref{ass:number_of_samples_in_state},
  \ref{ass:bounded_distributions}, \ref{ass:distance_between_states},
  \ref{ass:number_of_samples_per_LC}, \ref{ass:mgf_S_min}, and
  \ref{ass:existence_consistent},
  \begin{equation}
    \Prob {\widehat{\mathbf{A}} \neq \mathbf{A} } \asNtoinfty 0.
  \end{equation}
\end{theorem}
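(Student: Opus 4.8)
The plan is to show that $\widehat{\mathbf{A}}\neq\mathbf{A}$ forces at least one of the two-sample tests run by LICORS to return the wrong verdict, and then to kill the total probability of such an error with a union bound. First fix, once and for all, a $\delta>0$ small enough that Lemma~\ref{lem:F_i_is_sample_from_p_i} and Corollary~\ref{cor:likelihood_factorizes} apply, and, by the reduction discussed around Corollary~\ref{cor:likelihood_factorizes}, take $h_f=0$ so that FLCs are univariate and never overlap. Condition on the configuration of all past light cones (equivalently, on the state field $\mathbf{S}$): this fixes the true partition of $\{\ell_1^-,\dots,\ell_N^-\}$ and the counts $S_i(N,\delta)$, and by Corollary~\ref{cor:univariate_FLCs_conditional_independent} it makes the observed FLCs mutually independent, with each $\mathbf{F}_i(\delta)$ an IID sample of size $S_i(N,\delta)$ from the single true predictive distribution $\epsilon(\ell_i^-)$ (Lemma~\ref{lem:F_i_is_sample_from_p_i}); this is where Assumptions~\ref{ass:st-invariance}, \ref{ass:continuous_dynamics}, \ref{ass:number_of_samples_in_state} and \ref{ass:number_of_samples_per_LC} enter.

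Next, introduce the good event $\mathcal{G}$ that \emph{every} test the algorithm performs accepts exactly when its two samples come from the same true predictive state, and rejects otherwise. By induction on the order in which configurations are processed, on $\mathcal{G}$ the algorithm reconstructs the true partition exactly --- the base case $\epsilon_1=\{\ell_1^-\}$ is trivial, and if the first $i-1$ configurations have been classified correctly then a correct test puts $\ell_i^-$ into the unique existing state carrying its own predictive distribution, if any, and into a fresh state otherwise. Hence $\widehat{\mathbf{A}}=\mathbf{A}$ on $\mathcal{G}$, so $\Prob{\widehat{\mathbf{A}}\neq\mathbf{A}}\le\Prob{\mathcal{G}^{c}}$. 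Moreover, on $\mathcal{G}^{c}$ the \emph{first} erroneous test must be one of the at most $N\,m(N)$ ``canonical'' comparisons --- a configuration $\ell_i^-$ against the sample pooled over the true class of some already-processed configuration --- since all earlier tests being correct forces the run to agree with the truth up to that point (Assumption~\ref{ass:number_of_states} bounds the number of classes by $m(N)$).

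It remains to bound $\Prob{\mathcal{G}^{c}}\le\sum_{t}\Prob{t\text{ errs}}$, the sum over these $\le N\,m(N)$ canonical comparisons. For a single comparison of IID samples from $p$ and $q$: if $p=q$ the error probability is the size, and if $p\neq q$ then $\KL{p}{q}\ge d_{\min}\ge d_{n,n'}$ for $N$ large by Assumption~\ref{ass:distance_between_states}, so Assumption~\ref{ass:existence_consistent} controls the type~II rate. Since the predictive densities are bounded (Assumption~\ref{ass:bounded_distributions}) the likelihood ratio is bounded, and a Chernoff/large-deviations estimate lets one take the error probability of each comparison to be at most $C\exp(-\tilde c\,d_{\min}^{2}\,S_{\min}(N,\delta))$ for $N$ large, the exponent of order $d_{\min}^2$ arising via a reverse-Pinsker/Hellinger estimate relating $\KL{p}{q}$ to total variation. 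Summing over the $\le N\,m(N)$ comparisons and averaging over the past-cone field,
\begin{equation*}
\Prob{\widehat{\mathbf{A}}\neq\mathbf{A}} \le \Prob{\mathcal{G}^{c}} \le C\, N\, m(N)\,\E\, e^{-\tilde c\,d_{\min}^{2}\,S_{\min}(N,\delta)} \asNtoinfty 0
\end{equation*}
by Assumption~\ref{ass:mgf_S_min}, which is exactly the hypothesis tailored to make this union bound vanish.

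The main obstacle is precisely the interaction in the last step. Assumption~\ref{ass:existence_consistent} alone only guarantees a \emph{polynomial} $o(\min(n,n')^{-2})$ bound per comparison, while the number of comparisons grows like $N\,m(N)$ and the neighborhood sample sizes $S_i(N,\delta)$ are random and may grow slowly, so the naive union bound need not go to zero; the fix requires (i)~upgrading the per-comparison bound to one that is exponential in $S_{\min}$, which is where the bounded-likelihood-ratio Assumption~\ref{ass:bounded_distributions} is needed, and (ii)~absorbing the randomness of $S_{\min}(N,\delta)$ through the exponential-moment Assumption~\ref{ass:mgf_S_min}. A secondary technical point is justifying rigorously that the pooled samples of each estimated state are IID draws from a single true predictive distribution on $\mathcal{G}$ --- which is what the induction, the conditioning on $\mathbf{S}$, and the $h_f=0$/non-overlap reduction are for --- and checking that the at most $w$ overlapping FLCs discarded per cone (a geometric constant fixed by $c$, $h_f$ and the spatial dimension) change $S_{\min}$ only by the fixed factor $w$.
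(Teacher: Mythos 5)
Your proposal follows the same high-level plan as the paper---union-bound the probability that some two-sample test errs---but organizes it differently and is, at the crucial final step, more explicit. The paper bounds row-by-row over all $\mathcal{O}(N^2)$ pairwise comparisons (arriving at $N N_{\max}\alpha + (N^2 - N N_{\min})\beta$), handles sample overlap by splitting $I_i(\delta)\cap I_j(\delta)$ between the two index sets, and then simply asserts that under Assumption~\ref{ass:existence_consistent} $\alpha$ and $\beta$ decay fast enough that the whole expression vanishes; notably, Assumption~\ref{ass:mgf_S_min} is never invoked in the paper's closing line. You instead condition on the state field and take $h_f=0$ so that the $\mathbf{F}_i(\delta)$ are conditionally IID, introduce the good event $\mathcal{G}$ and argue inductively that on $\mathcal{G}$ the algorithm reproduces the true partition, so that only the $\le N\,m(N)$ ``canonical'' comparisons against already-established states need be controlled. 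That sharper count is exactly what matches the $N\,m(N)$ prefactor in Assumption~\ref{ass:mgf_S_min}, and your derivation is the one that actually makes contact with it.

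The one place where your argument outruns the stated hypotheses is the per-test error bound $C\exp(-\tilde c\,d_{\min}^2\,S_{\min}(N,\delta))$. Assumption~\ref{ass:existence_consistent} grants only $\alpha,\beta=\littleO{\min(n,n')^{-2}}$, a polynomial rate, and says nothing about the test being a likelihood-ratio (or otherwise Chernoff-amenable) procedure; bounded likelihood ratios (Assumption~\ref{ass:bounded_distributions}) would give an exponential bound for a \emph{specific} test, not for an arbitrary test satisfying Assumption~\ref{ass:existence_consistent}. You are candid that this upgrade is the crux and that the naive union bound with the polynomial rate need not vanish when $S_{\min}$ grows slowly --- a correct and important observation. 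But to close the argument you would need either to strengthen Assumption~\ref{ass:existence_consistent} to an exponential error rate in $\min(n,n')$, or to exhibit a concrete test (e.g., an empirical-KL or likelihood-ratio test built on Assumption~\ref{ass:bounded_distributions}) and prove the large-deviations bound for it. As written, this step is a plausible sketch rather than a proof; the paper's own last step has a cognate gap (it never uses Assumption~\ref{ass:mgf_S_min}), so this is a place where neither version is fully airtight, but yours at least identifies and engages with the difficulty.
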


\section{Simulations}
\label{sec:simulations}

To evaluate the non-asymptotic predictive ability of LICORS, and to compare it
to more conventional methods, we use the following simulation, designed to be
challenging, but not impossible.  $\field{X}{r}{t}$ is a continuous-valued
field in $(1+1)D$, with a discrete latent state $\field{d}{r}{t}$.  We use
``wrap-around'' boundary conditions, so sites $0$ and $\card{S}-1$ are
adjacent, and the one spatial dimension is a torus.  The observable field
$\field{X}{r}{t}$ is conditionally Gaussian,
\begin{equation}
  \label{eq:CA_cont_predictive_rule_normal}
  \field{X}{r}{t} \mid \field{d}{r}{t}  \sim \begin{cases}
    \mathcal{N}(\field{d}{r}{t}, 1),& \text{ if } | \field{d}{r}{t} | < 4, \\
    \mathcal{N}(0, 1), & \text {otherwise},
  \end{cases}
\end{equation}
with initial conditions $\field{X}{\cdot}{1} = \field{X}{\cdot}{2} =
\mathbf{0}$.  The state space $\field{d}{r}{t}$ evolves with the observable
field,
\begin{equation}
  \label{eq:state_description}
  \small \field{d}{r}{t} = \left[ \frac{\sum_{i=-2}^{2}{\field{X}{r+i \bmod \card{S}}{t-2}}}{5} - \frac{\sum_{i=-1}^{i}{\field{X}{r+i \bmod \card{S}}{t-1}}}{3} \right],
\end{equation}
where $[x]$ is the closest integer to $x$.  In words, Eq.\
\eqref{eq:state_description} says that the latent state $\field{d}{r}{t}$ is
the rounded difference between the sample average of the $5$ nearest sites at
$t-2$ and the sample average of the $3$ nearest sites at $t-1$.  Thus $h_p = 2$
and $c=1$.

If we include the present in the FLC, \eqref{eq:CA_cont_predictive_rule_normal}
gives $h_f = 0$, making FLC distributions one-dimensional and letting us use
the Kolmogorov-Smirnov test.  As $\field{d}{r}{t}$ is integer-valued, a little
calculation shows there are $7$ predictive states, which we may label with
their conditional means as $\lbrace \epsilon_{-3}, \epsilon_{-2}, \ldots,
\epsilon_{2}, \epsilon_{3} \rbrace$. Thus $\field{X}{r}{t} \mid \epsilon_k \sim
\mathcal{N}(k, 1)$.

\begin{figure}[!t]
  \centering \subfloat[]
  {
  \includegraphics[width=.45\textwidth]{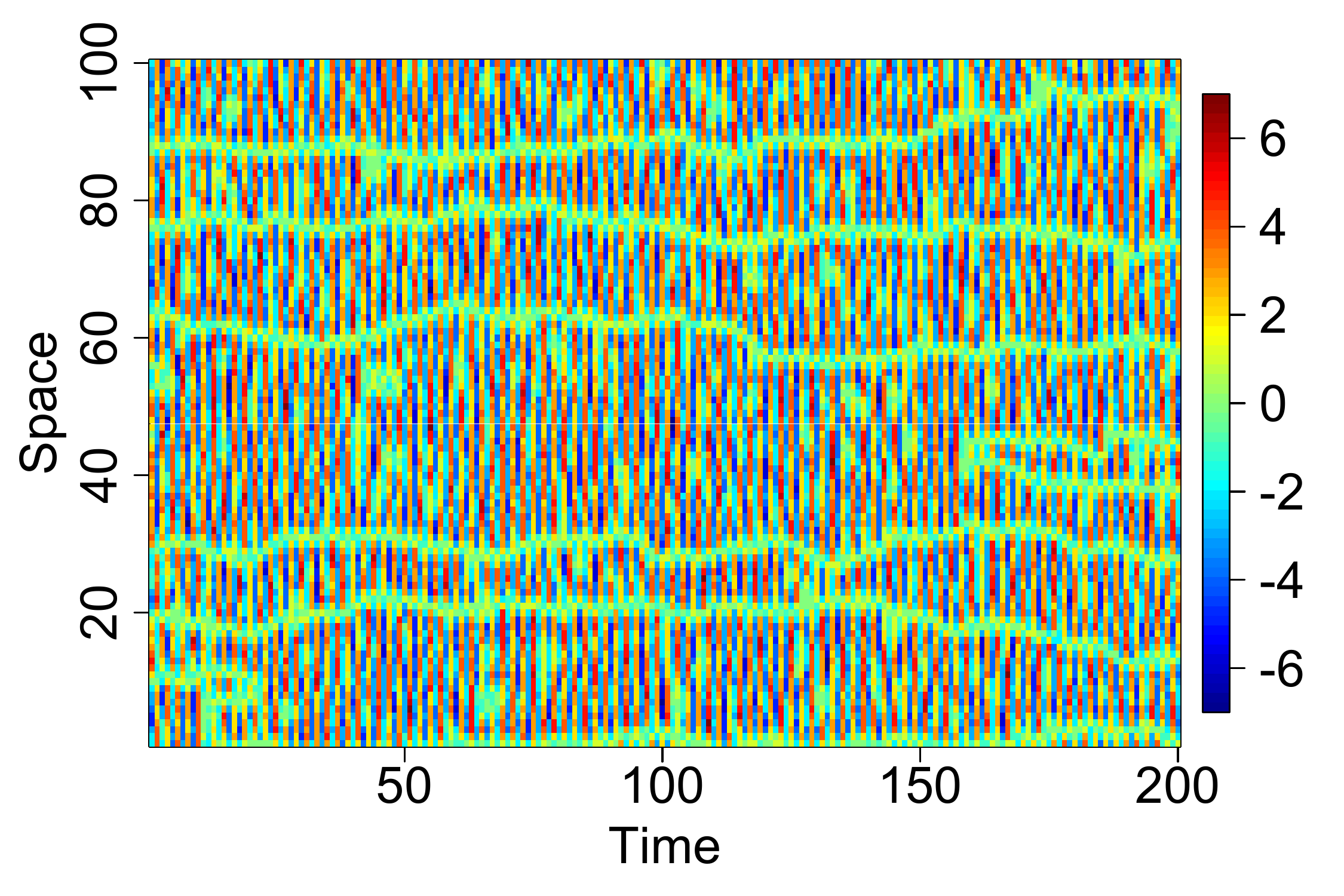}\label{fig:cont_CA_round_normal_states}
  } 
  \hspace{0.01\textwidth} 
  \subfloat[]
  {
  \includegraphics[width=.45\textwidth]{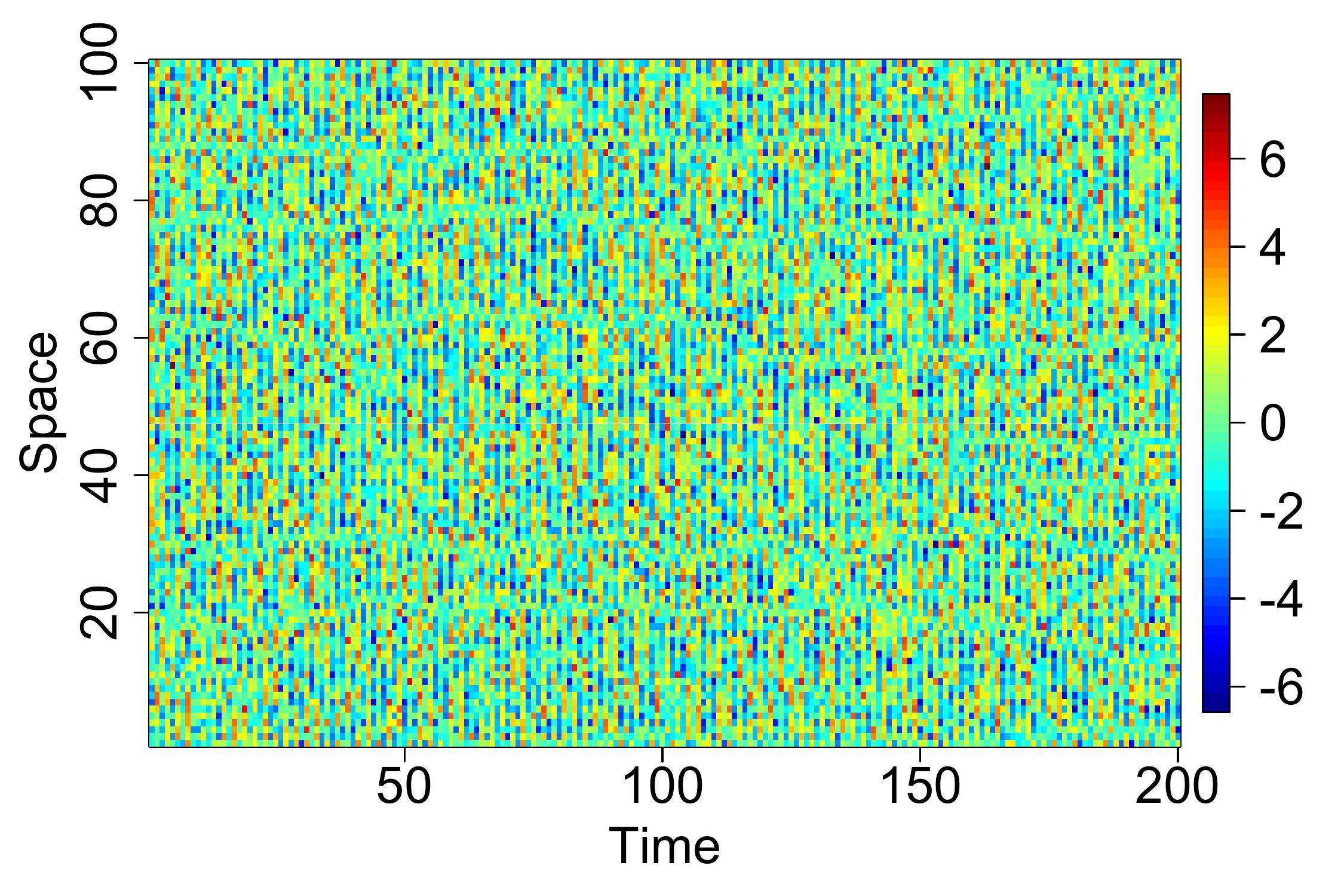}\label{fig:cont_CA_round_normal_observed}
  }

  \caption{\label{fig:cont_CA_round_states_Simulations} Simulation of
    \eqref{eq:CA_cont_predictive_rule_normal}--\eqref{eq:state_description}:
    (a) state-space $\field{d}{r}{t}$, (b) observed field $\field{X}{r}{t}$.
    Space (100 cells) runs vertically, time (200 steps, first 100 discarded for
    burn-in) runs from left to right.}
\end{figure} 

\begin{figure}[!t]
  \centering 
  \subfloat[]
  {
  \includegraphics[width=.45\textwidth]{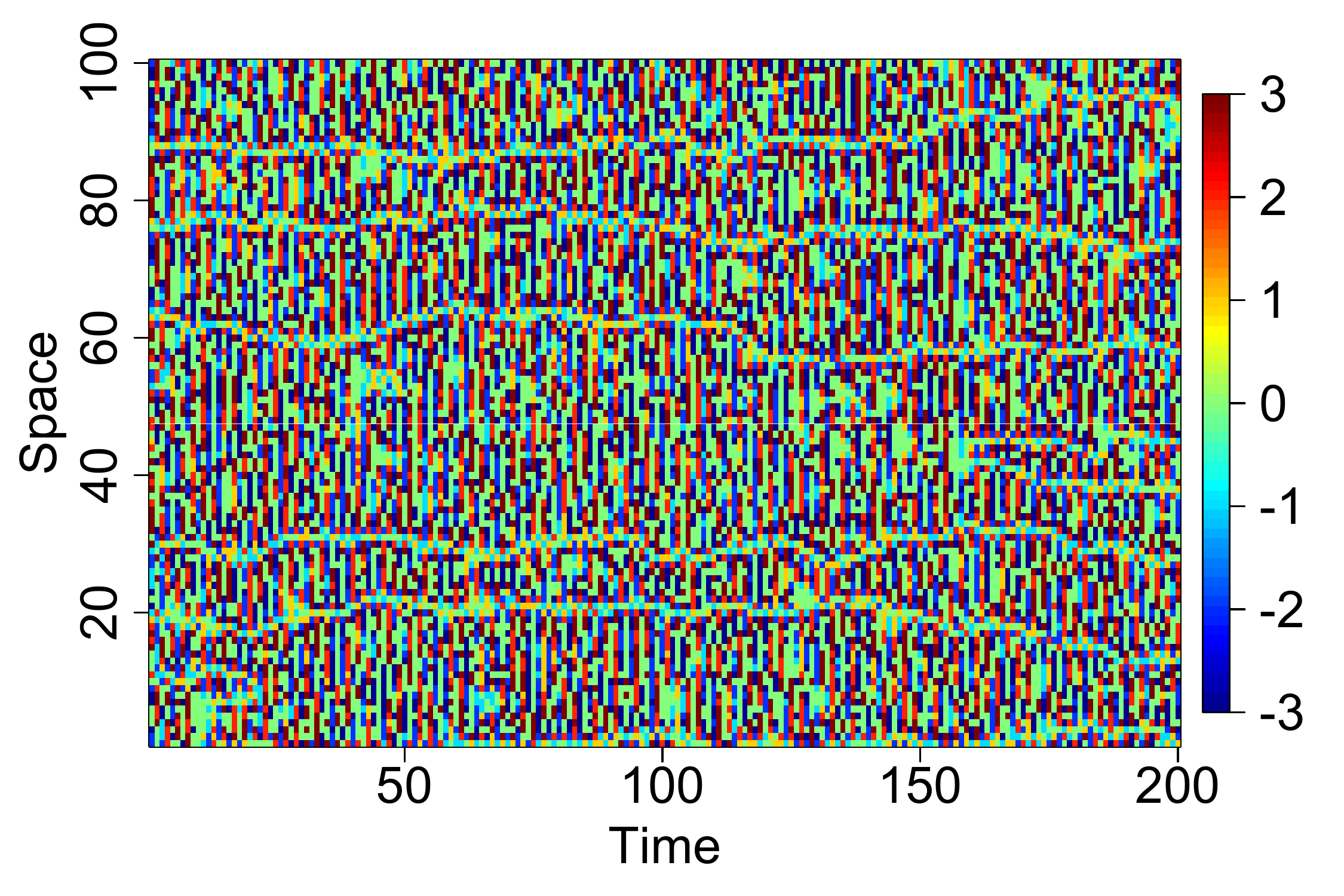}\label{fig:cont_CA_round_normal_predictive_states}
  }
  \hspace{0.02\textwidth}
  \subfloat[]
  {
  \includegraphics[width=.45\textwidth]{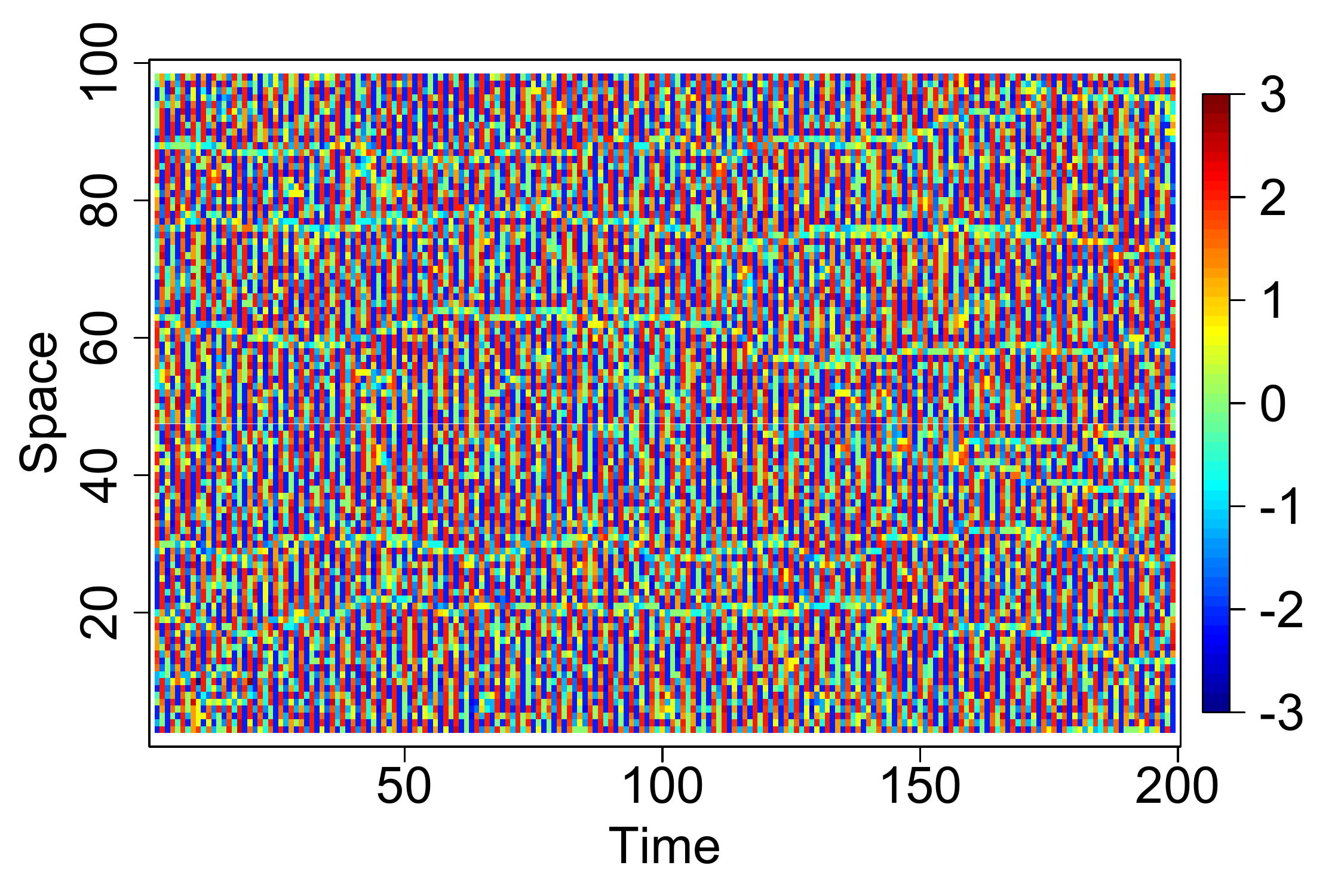}\label{fig:cont_CA_round_normal_predictions}
  }
  \caption{\label{fig:cont_CA_round_states_LICORS} Comparison of true and
    estimated predictive distributions.  (a) true predictive state
    $\field{S}{r}{t}$, with points colored by conditional expectations; (b)
    LICORS predictions, with states and distributions reconstructed using
    $k=50$ nearest neighbors (fixed) and $h_p=2, \alpha=0.2$ (chosen by
    cross-validation).}
\end{figure} 

Figure \ref{fig:cont_CA_round_states_Simulations} shows one realization of
\eqref{eq:CA_cont_predictive_rule_normal}--\eqref{eq:state_description}.  The
latent states have clear spatial structures, which is obscured in the observed
field.  Figure \ref{fig:cont_CA_round_normal_predictive_states} shows the true
predictive state space $\field{S}{r}{t}$ (expected value at each at each
$\field{}{r}{t}$); the LICORS estimate $\field{\widehat{S}}{r}{t}$ is shown in
Fig.\ \ref{fig:cont_CA_round_normal_predictions}.  LICORS not only accurately
estimates $\field{S}{r}{t}$, but also learns the prediction rule
\eqref{eq:CA_cont_predictive_rule_normal} from the observed field
$\field{X}{r}{t}$.

\subsection{Forecasting Competition: AR, VAR, and LICORS}
\label{sec:forecasting_comparison}

A brute-force approach to spatio-temporal prediction would treat the whole
spatial configuration at any one time as a single high-dimensional vector, and
then use ordinary, parametric time-series methods such as vector
auto-regressions (VAR) \citepmain{Luetkepohl07}, or non- or semi-
non-parametric models \citepmain{Bosq-nonparametric,Fan-Yao-time-series}.  Such
global approaches suffer under the curse of dimensionality: real data sets may
contain millions of space-time points, so fitting global models becomes
impractical, even with strong regularization \citepmain{Bosq-Blanke-large-dim}.
Moreover, such global models will not be good representations of complex
spatial dynamics.

On the other hand, space can be broken up into small patches (in the limit,
single points), and then one can fit standard time series models to each
patch's low-dimensional time series.  Such local strategies (partially) lift
the curse of dimensionality, and thus make VAR or non-parametric time-series
prediction practical, but creates the problem of selecting good sizes and
shapes for these patches, and ignores spatial dependence across patches.

To show how LICORS escapes this dilemma, we compare it to other forecasting
techniques in a simulation.  Using $100$ replications of
\eqref{eq:CA_cont_predictive_rule_normal} -- \eqref{eq:state_description}, with
$n = 100$ points in space, and $T = 200$ steps in time, we compared LICORS,
with and without pre-clustering, to (a) the empirical time-average of each
spatial point; (b) a separate, univariate $AR(p)$ model for each point; a (c)
separate $VAR(p)$ for each non-overlapping spatial patch of $5$ points; and the
true conditional expectation function.  (See \S \ref{sec:competing-methods} in
the Supplemental Information for details of the competing methods.) 


\begin{figure}[!t]
  \centering 
  \subfloat[In-sample]
  {
  \includegraphics[width=.48\textwidth]{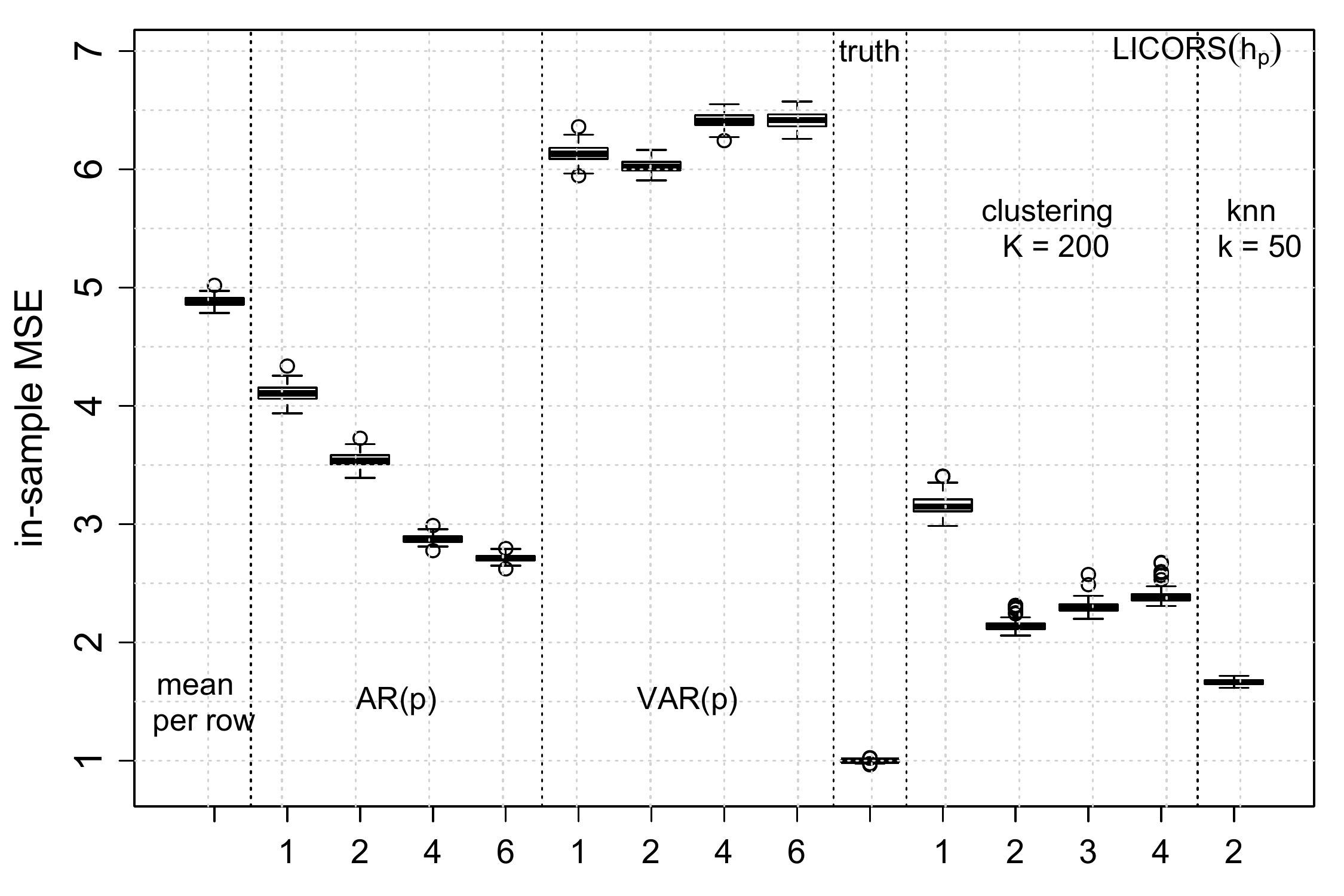}\label{fig:ERM_AR_VAR_LICOR_nclusters_knn}
  }
  \subfloat[Out-of-sample]
  {
  \includegraphics[width=.48\textwidth]{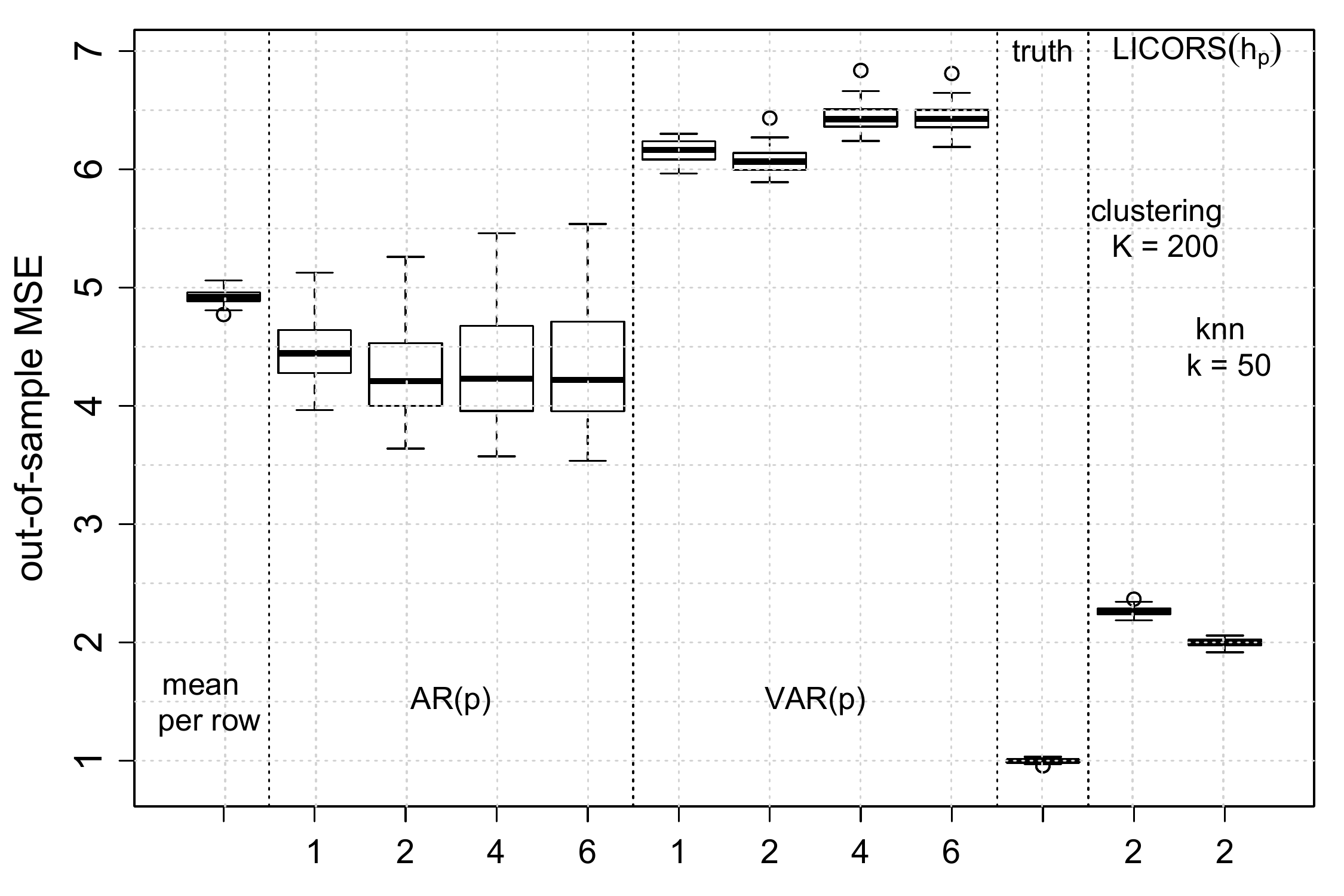}\label{fig:ERM_out_of_sample_AR_VAR_LICOR_nclusters_knn}
  }
  \caption{\label{fig:ERM_in_out_AR_VAR_LICOR_nclusters_knn} MSEs for LICORS
    and parametric competitors on
    \eqref{eq:CA_cont_predictive_rule_normal}--\eqref{eq:state_description}.
    LICORS with pre-clustering used $K=200$ clusters and varying past horizons;
    LICORS without pre-clustering use $k=50$ neighbors and $h_p=2$; both variants
    fixed $\alpha = 0.05$.}
\end{figure}

Figure \ref{fig:ERM_in_out_AR_VAR_LICOR_nclusters_knn} shows for each predictor the
estimated mean squared error (MSE) for the in-sample 
(Fig.\ \ref{fig:ERM_AR_VAR_LICOR_nclusters_knn}) as well as out-of-sample (Fig.\ \ref{fig:ERM_out_of_sample_AR_VAR_LICOR_nclusters_knn}) one-step ahead prediction
error.  Splitting up space while using standard methods appears not to help and
may even hurt.  LICORS performs best among all methods, once $h_p \geq 2$.
While pre-clustering performs worse than direct estimation, it still predicts
much better than the other methods.

Overall, LICORS with $h_p = 2$ gives the best forecasts, where $\alpha = 0.05$
was set in advance.  At no point did we make an assumption about the number of
predictive states or the shape of the conditional distribution.  Even though
the true system is conditionally Gaussian, LICORS out-performed the parametric
Gaussian models.  Thus we expect to do even better on non-Gaussian fields.

Even though we know the true light cone size in simulations, the ``true''
$\alpha$ can not be obtained directly. It controls the number of estimated
predictive states: larger $\alpha$ implies less merging of clusters, and thus
more number of predictive states; smaller $\alpha$ leads to more merging and
hence less states.

In practice, one does not know the true light cone size nor the true number of
states; they are rather control settings which affect the predictive
performance.  As we can accurately measure predictive performance by
out-of-sample MSE, we propose a cross-validation (CV) procedure to tune $h_p$
and $\alpha$.

\subsection{Cross-validation to Choose Optimal Control Settings}
\label{sec:CV}

A good method should learn the invariant predictive structures of the system,
avoiding over-fitting to the accidents of the observed sample.  Ideally, the
method should estimate nearly the same predictive states from (almost) any two
realizations of the same system, while still being sensitive to differences
between distinct systems.  

Cross-validation is the classic way to handle this sensitivity-stability
trade-off, and we use a data-set splitting version of it here.  We simply
divide the data set at its mid-point in time, use its earlier half to find
predictive states, and evaluate the states' performance on the data's later
half; see Supplemental Figure \ref{fig:CV_Overview}.  (Assumption
\ref{ass:st-invariance}, of conditional stationarity, is important here.)
While quite basic, simulations show that it does indeed find good control
settings.

Using the same realizations of the model system as in the forecasting
competition, we tried all combinations of $h_p \in \lbrace 1,2,3 \rbrace$ and
$\alpha \in \lbrace 0.3, 0.2, 0.15, 0.1, 0.05, 0.01, 0.001 \rbrace$.  We picked
the control settings to do well on the continuation of the sample realization,
but since this is a simulation, we can also check that these settings perform
well on an {\em independent} realization of the same process.  Figure
\ref{fig:CV_LICOR_comparisons} compares, for the selected control settings, the
in-sample MSE on the first half of each realization, the MSE on the second
half, and the MSE on all of a completely independent realization, for both the
direct and the pre-clustered versions of LICORS.  (As before, direct estimation
does a bit better than pre-clustering.)  There is little difference between the
MSEs on the continuation of the training data and on independent data,
indicating little over-fitting to accidents of particular sample paths.  (See
\S \ref{sec:excess-risk} in the supplemental information for further details.)
Notably, CV picked the optimal $h_p$, namely 2, on all 100 trials.

\begin{figure}[!t]
  \centering 
  \includegraphics[width=.8\textwidth]{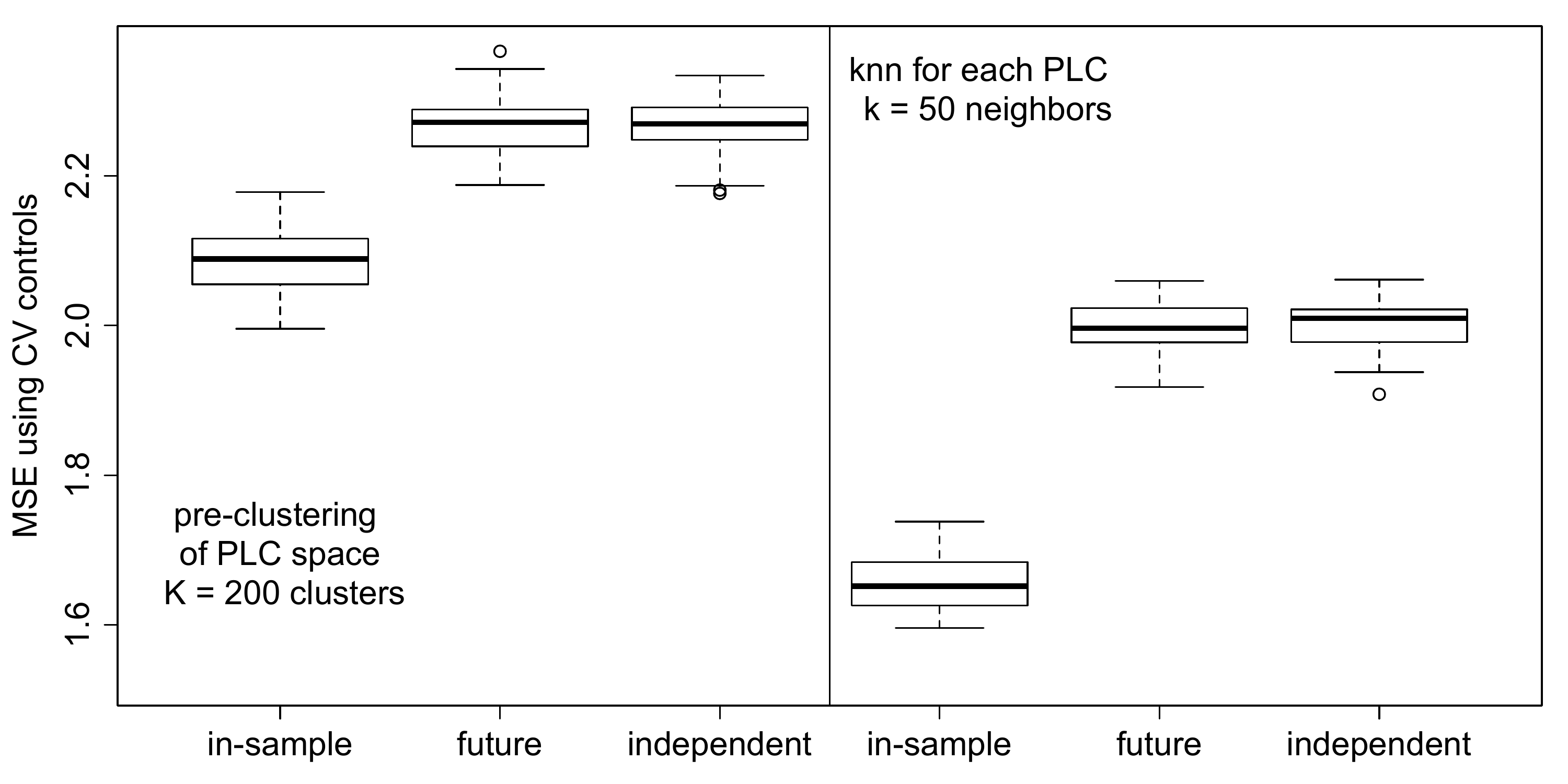}
  \caption{\label{fig:CV_LICOR_comparisons} Cross-validation for LICORS: MSE,
    using the CV-picked control settings, on the first half of each realization
    (``in-sample''), on the second half (``future''), and on all of an
    independent realization (``independent'').}
\end{figure}


As expected, the smaller the value of $\alpha$ picked by CV, the more merging
between clusters, and the smaller the number of states (see Supplemental Figure
\ref{fig:CV_Excess_risk_vs_states}).  Here, the true number of states $m=7$,
but both pre-clustering and direct estimation give much higher $\widehat{m}$
(10--30 with pre-clustering, 30--90 without).  The gap appears to be due to
cross-validating pushing (in this context) for lower approximation error and
more states, rather than fewer states and lower estimation error (\S
\ref{sec:excess-risk} in the supplemental information).  Having $\widehat{m}$
be substantially larger than $m$ thus does not degrade out-of-sample
predictions.

\section{Discussion}
\label{sec:discussion}

\subsection{Related Work}
\label{sec:related_work}

Predictive state reconstruction estimates the prediction processes introduced
by \citetmain{Knight-predictive-view}.  Knight's construction is for stochastic
processes $X$ with a single, continuous time index; but since $X_t$ can take
values in infinite-dimensional spaces, most useful spatial models can
implicitly be handled in this way, and by considering discrete time we avoid
many measure-theoretic complications.  After Knight, the same basic
construction of the prediction process was independently rediscovered in
nonlinear dynamics and physics \citepmain{Inferring-stat-compl,CMPPSS}, in
machine learning
\citepmain{Jaeger-operator-models,predictive-representations-of-state,Langford-Salakhutdinov-Zhang},
and in the philosophy of science \citepmain{Salmon-1984}.

Spatio-temporally local prediction processes were introduced in
\citetmain{CRS-prediction-on-networks,QSO-in-PRL} to study self-organization
and system complexity, along lines suggested by
\citetmain{Grassberger-1986,Inferring-stat-compl}.  A related proposal was made
by was made by \citetmain{Parlitz-Merkwirth-local-states}, and light cones have
been used in stochastic models of crystallization
\citepmain{Capasso-Micheletti-spatial-birth-growth}, going back to
\citetmain{Kolmogorov-light-cones}.

While the prediction-process formalism allows for continuous-valued observable
fields, the prior work by Shalizi {\em et al.} only gave procedures for
discrete-valued fields.  J\"anicke {\em et al.}  used those procedures on
continuous-valued data by discretizing them
\citepmain{Jaenicke09_Diss,Janicke-Scheuermann-automatic-visualization,Jaenicke-et-al-multifield-visualization}.
We avoid discretization by using methods to estimate and compare continuous,
high-dimensional distributions.

\subsection{Conclusion}

We introduce a new non-parametric method, LICORS, for spatio-temporal
prediction.  LICORS learns the predictive geometry in the state space
underlying the system, by clustering observations according to the similarity
of their local predictive distributions. Together with our cross-validation
scheme, LICORS is a fully data-driven, non-parametric method to learn and use
the non-linear, high-dimensional dynamics of a large class of spatio-temporal
systems. The good performance of the CV procedure (Fig.\
\ref{fig:CV_Overview}) suggests that using it to find control settings in
applications will avoid over-fitting.

Under weak assumptions, LICORS consistently estimates predictive distributions.
Simulations show that it largely outperforms standard prediction methods.  We
have motivated presented results for $(1+1)D$ fields, but both the theory and
practice extend without modification to higher-dimensional fields.  While it
will be good to extend LICORS to handle continuous predictive states, and to
derive theoretical guarantees about its behavior under cross-validation, it can
already be applied to experimental data.  It provides a powerful, principled
tool for forecasting complex spatio-temporal systems.

\bibliographystylemain{plainnat}
\bibliographymain{../../../bib/PhD_thesis,../../../bib/locusts}

\newpage
\appendix
\setcounter{page}{1}
\begin{center}
\Large Supplementary Material for ``\ThesisTitle''
\end{center}

\appendix

\section{Predictive States: Details on Methodology, Implementation, and Algorithms}

\begin{figure}[!t]
  \begin{center}
    \fbox{
      \begin{minipage}{0.99\textwidth}
        \begin{enumerate}
        \item Collect the PLC and FLC configurations, $\field{\ell^{-}}{r}{t}$
          and $\field{\ell^{+}}{r}{t}$, for each $\field{}{r}{t}$ in the
          observed field $\field{X}{r}{1}, \ldots, \field{X}{r}{T}$.
        \item \label{step:cluster_no_cluster} To cluster or not to cluster:
          \begin{enumerate}
          \item \label{step:no_clustering} Assign each point to its own
            cluster. Only for small $N$ this is computationally feasible.
          \item \label{step:clustering} Perform an initial clustering (e.g.\
            (e.g., K-means++ \citepapp{k-means-plus-plus}) in the PLC
            configuration space (Section \ref{sec:partition}).
          \end{enumerate}
        \item \label{step:merging} For each pair of clusters, test whether the
          estimated conditional FLC distributions are significantly different,
          at some fixed level $\alpha$ (Section \ref{sec:testing_equality}).
          If not, merge them and go on.  Stop when no more merges are possible.
        \item Treat the remaining clusters as predictive states, and estimate
          the conditional distributions over FLC configurations.
        \item Return the partition of PLC configurations into predictive
          states, and the associated predictive distributions.
        \end{enumerate}
      \end{minipage}
    }
  \end{center}
  \caption{\label{fig:PS_Estimation_Overview} Estimating predictive states from
    continuous-valued data: in \ref{step:no_clustering} conditional
    distributions are tested for each $\ell^{-}_i$, $i = 1, \ldots, N$, using a
    $\delta$-neighborhood (or $k$ nearest neighbors) of $\ell^{-}_i$ (see
    Section \ref{sec:getting_samples} for details); \ref{step:clustering} uses
    an initial clustering to reduce complexity of the testing problem from
    $\mathcal{O}(N^2)$ to $\mathcal{O}(K^2)$ (see also Section
    \ref{sec:partition}).}
\end{figure}

\begin{algorithm}[!t] 
  \caption{\label{alg:test_predictive_distributions} Test equality of
    conditional predictive FLC distributions $\Prob{L^{+} \mid \text
      {clusterID} = k}$} 
  \begin{algorithmic} 
    \REQUIRE

    \STATE
    \begin{tabular}{|lcl|}
      \hline
      Data: & & \\
      $\mathbf{F} = \lbrace \ell^{+}_i \rbrace_{i=1}^{N} \in \R^{N \times n_f}$ & $\ldots$ & array with FLCs \\
      clusterID & $\ldots$ & labels of the PLC partitioning (step \ref{step:no_clustering} or \ref{step:clustering} in Fig.\ \ref{fig:PS_Estimation_Overview})\\
      \hline
      Parameters: &   & \\
      $\alpha \in [0, 1]$ & $\ldots$ & significance level $\alpha$ for testing $H_0: \Prob{L^{+} \mid \ell^{-}_i} = \Prob{ L^{+} \mid \ell^{-}_j} $\\
      \hline
    \end{tabular}
    \ENSURE

    \STATE $k_{\max} = \max$ clusterID

\FOR{$k = 1, \ldots, k_{\max}$}
\STATE fetch FLC samples given partition $P_k$: $\mathbf{F}_k = \lbrace \ell^{+}_i \rbrace_{ \lbrace i \mid clusterID[i] == k \rbrace }$
\STATE j = k
\STATE lasttested = 0
\STATE pvalue = 1
\WHILE{pvalue $> \alpha$ or $j \leq k_{\max}$}
\STATE j = j+ 1

\STATE lasttested = j
\STATE fetch FLC samples given partition $P_j$: $\mathbf{F}_j = \lbrace \ell^{+}_i \rbrace_{ \lbrace i \mid clusterID[i] == j \rbrace }$
\STATE pvalue $\leftarrow test(\Prob{ L^{+} \mid P_k} = \Prob{ L^{+} \mid P_j} \mid \mathbf{F}_k, \mathbf{F}_j)$
\IF{pvalue $< \alpha$}
\STATE merge cluster $j$ with cluster $k$: clusterID[clusterID == j] = k
\ENDIF
\ENDWHILE
\ENDFOR

After no merging is possible clusterID contains the labels of the predictive states.

\RETURN clusterID 
\end{algorithmic}

\end{algorithm}

We partition the observed PLCs $\lbrace \ell_{i}^{-}
\rbrace_{i=1}^{N} \subset \R^{n_p}$ into $K = K(\delta)$ disjoint groups
$\lbrace P_k \rbrace_{k=1}^{K}$, choosing the number of groups so that all have
diameters less than $\delta$.  This choice of $K(\delta)$ guarantees
(Assumption \ref{ass:continuous_dynamics}) that all $\ell^{-} \in P_k$ have
predictive distributions that are at most $\rho$ apart.  Thus all PLCs within a
group $P_k$ are (nearly) equivalent by Definition
\ref{def:equivalent_configurations}.  This in turn means we only need to
compare predictive distributions between clusters. 

\subsection{Lebesgue Smoothing}
\label{sec:Lebesgue_smoothing}

In a standard kernel regression approach one would compute a similarity measure
on PLCs $\ell^{-}_i$ and then use a weighted mean of FLCs $\ell_i^{+}$ to get a
point prediction of the future cone, i.e.,
\begin{equation}
  \label{eq:Riemann_Kernel_Prediction}
  \field{\widehat{L^{+}}}{r}{t} = \sum_{\field{}{q}{\tau}}{w^{-}\left( \vecS{q}, \tau; \field{}{r}{t} \right) \field{\ell^{+}}{q}{\tau}} \text{ for all } \tau < t,
\end{equation}
where $w^{-}\left( \vecS{q}, \tau; \field{}{r}{t} \right) \propto K^{-}\left(
  \vnorm{\field{\ell^{-}}{r}{t} - \field{\ell^{-}}{q}{\tau}} \right)$ are
normalized weights determined by a kernel $K^{-}\left( \cdot \right)$ in the
PLC configuration space. For example, a Gaussian kernel $K_h^{-}\left(
  \field{\ell^{-}}{r}{t}, \field{\ell^{-}}{q}{\tau} \right) = \exp(-
\frac{1}{h} \vnorm{ \field{\ell^{-}}{r}{t}- \field{\ell^{-}}{q}{\tau}}_2^2 )$
with squared Euclidean distance and bandwidth $h$.

Since $\field{}{q}{\tau}$ ranges over the entire space-time, $\vecS{q} \in
\vecS{S}$, $\tau = 1, \ldots, t-1$, computing this many similarities $\lbrace
s_{\field{}{q}{\tau}, \field{}{r}{t} } \rbrace$ becomes very time consuming.  A
typical $10$-second video might have $N=3 \cdot{10}^7$ space-time
points.\footnote{$25$ frames per second and $300 \times 400$ pixels.}  To
evaluate \eqref{eq:Riemann_Kernel_Prediction} needs $3 \cdot 10^7$ similarities
in $n_p$-dimensional space --- and this just to predict one FLC.  If $N$ is
large, then predictive state estimation is a necessary pre-step before making
predictions.

Our approach differs in two important ways. First, we assume a discrete
predictive state space which is sufficient to predict the future. Thus once we
have estimated the predictive states $\epsilon_1, \ldots, \epsilon_m$, we can predict the field
at any $\field{}{r}{t}$ using the average (or mode) of the estimated predictive state at
$\field{}{r}{t}$,
\begin{equation}
\label{eq:state_space_FLC_pred}
\field{\widehat{L^{+}}}{r}{t} = \E_{\widehat{\epsilon}\left(\field{\ell^{-}}{r}{t} \right) } \left(  \field{L^{+}}{r}{t} \right).
\end{equation}

Second, we learn a new geometry on the PLC space by defining closeness in the
FLC distribution space, rather than in the PLC configuration space. Thus a
natural continuous state space extension of \eqref{eq:state_space_FLC_pred} is
a Kernel regression with weights that depend on the similarity in the output
rather than the input space, i.e.\
\begin{equation}
  \label{eq:Lebesgue_Kernel_Prediction}
  \field{\widehat{L^{+}}}{r}{t} = \sum_{\field{}{q}{\tau}}{ w^{+}\left( \vecS{q}, \tau; \field{}{r}{t}  \right) \field{\ell^{+}}{q}{\tau}} \text{ for all } \tau < t,
\end{equation}
where the normalized weights $w^{+}\left( \vecS{q}, \tau; \field{}{r}{t}
\right) \propto K^{+}\left( \vnorm{ \Prob{\field{\ell^{+}}{r}{t}}
    - \Prob{\field{\ell^{+}}{q}{\tau}}} \right)$ are based on a
Kernel $K^{+}\left( \cdot \right)$ in the FLC distribution space.

One can generalize \eqref{eq:Lebesgue_Kernel_Prediction} to the classic
non-parametric regression setting $y = m(x) + u$ and define a new Kernel
regression estimator as
\begin{equation}
  \label{eq:Lebesgue_smoothing}
  \widehat{m}^{(L)}(x) = \sum_{i=1}^{n}{\frac{K_y\left( \widehat{m}^{(R)}(x_i) - m(x) \right)}{h_y} y_i} ~ ,
\end{equation}
where $\widehat{m}^{(R)}(\cdot)$ serves as a pilot estimate; for example the
classic kernel regression smoother
\begin{equation}
  \label{eq:Riemann_smoothing}
  \widehat{m}^{(R)}(x) = \sum_{i=1}^{n}{\frac{K_x\left( x_i - x \right)}{h_x} y_i} ~ .
\end{equation}
As we average over nearby predictions rather than nearby inputs, we may call
\eqref{eq:Lebesgue_smoothing} ``Lebesgue smoothing'', in contrast to the
``Riemann'' smoothing of \eqref{eq:Riemann_smoothing}.  If $N$ is small, then
we can forecast with \eqref{eq:Lebesgue_Kernel_Prediction} forecast without
estimating predictive states.  However, here we focus on predictive-state
recovery, and leave Lebesgue smoothed LICORS to future work.

\paragraph{Further performance enhancements for testing}

While it is better to do $\mathcal{O}(K^2)$ high-dimensional tests than
$\mathcal{O}(N^2)$, it would be better still to speed up each test.  Since two
distributions are the same only if their moments are, we can start by testing
simply for equality of means, which is fast and powerful, and do a full
distributional test only if we cannot reject on that basis.  For multivariate
mean tests we can use the Hotelling test \citepapp{ABW98} and its randomized
generalization \citepapp{Lopes-et-al-randomized-Hotelling-test}.  Yet another
strategy to reduce the number of costly high-dimensional, non-parametric tests
is to test various functions $f(\cdot)$ of the samples.  If the distributions
of $\mathbf{F}_{k_i}(\delta)$ and $\mathbf{F}_{k_j}(\delta)$ are the same, then
also $\Prob{f\left( \mathbf{F}_{k_i}(\delta) \right)} = \Prob{f\left(
    \mathbf{F}_{k_j}(\delta) \right)}$ for any measurable $f$. Particularly, we
can apply random projections \citepapp{Lopes-et-al-randomized-Hotelling-test} to
$\mathbf{F}_{k_i}$ to go from the high-dimensional $\R^{n_f}$ down to the
one-dimensional $\R$, followed by a Kolmogorov-Smirnov test.  Only if these
tests can not reject equality for several projections, one uses fully
non-parametric tests.

\section{The Simulation and the Forecasting Competition}

\subsection{Details of Competing Methods}
\label{sec:competing-methods}

The local VAR models were fit with Lasso regularization
\citepapp{BickelSong11_LargeVAR}, as implemented in the \texttt{fastVAR}
package \citepapp{fastVAR}. We also tried un-regularized VAR models,
but they performed even worse.

\begin{figure}[!t]
  \begin{center}
    \fbox{
      \begin{minipage}{0.99\textwidth}
        \begin{enumerate}
        \item \label{step:split_half} Split dataset at its middle in time:
          $\mathcal{D}_1 = \lbrace \field{X}{r}{t} \rbrace_{t=1}^{T/2}$ and
          $\mathcal{D}_2 = \lbrace \field{X}{r}{t} \rbrace_{t=T/2 + 1}^{T}$
        \item \label{step:train_1st} For each combination of control settings,
          do:
          \begin{enumerate}
          \item Training: estimate predictive states from $\mathcal{D}_1$
          \item Test-set prediction: find predictive state of each PLC $\in
            \mathcal{D}_2$ and predict its FLC $\in \mathcal{D}_2$.
          \item Error: compare to the observed FLCs $\in \mathcal{D}_2$ and
            compute the loss.
          \end{enumerate}
        \item Choose the control settings with the smallest test-set loss.
        \end{enumerate}
      \end{minipage}
    }
  \end{center}
  \caption{\label{fig:CV_Overview} Cross-validation to choose control settings
    given data $\lbrace \field{X}{r}{t} \rbrace_{t=1}^{T}$.}
\end{figure}

\subsection{Excess Risk, Test Size, and Number of Estimated States}
\label{sec:excess-risk}

Figs.\ \ref{fig:ContCA_h_p=2_sd1_trunc4_outofsampleCV_MSE_nclusters200_params}
and \ref{fig:ContCA_h_p=2_sd1_trunc4_outofsampleCV_MSE_knn50_params} show the
expected relationship between $\alpha$ and the number of predictive states
recovered $\widehat{m}$: smaller $\alpha$ leads to more merging, and fewer
states.  Here the true number of states $m=7$, but both pre-clustering and
direct estimation give much higher $\widehat{m}$.  Thus for LICORS, optimal
forecasting pushes for more states and lower approximation error, rather than
fewer states and lower estimation error.  We can check this explanation by
considering the ratio
\begin{equation}
\label{eq:MSE_ratio}
\text{excess risk} := \frac{\text{MSE(sample $i+1$)} \text{ using } (h_p, \alpha)_{i, CV_i}}{\text{MSE(sample $i+1$)} \text{ using } (h_p, \alpha)_{i+1, \min}} \geq 1.
\end{equation}
Recall that $(h_p, \alpha)_{i, CV_i}$ is chosen using only sample $i$, while
$(h_p, \alpha)_{i+1, \min}$ is the minimizing pair after having evaluated the
MSE on sample $i+1$. The best that any data-driven procedure could do would be
to guess $(h_p, \alpha)_{i+1, \min}$ from sample $i$, so the excess risk is
$\geq 1$, with equality only if CV picked the optimal control settings.  The
scatter-plots show that our CV procedure has an excess risk on the order of
$10^{-2}$ compared to the oracle pair.  Hence, even though $\widehat{m}$ is
substantially larger than $m$, the difference is practically irrelevant for
predictions.

\begin{figure}[!t]
  \centering 
  \subfloat[]
  {\includegraphics[width=.4\textwidth]{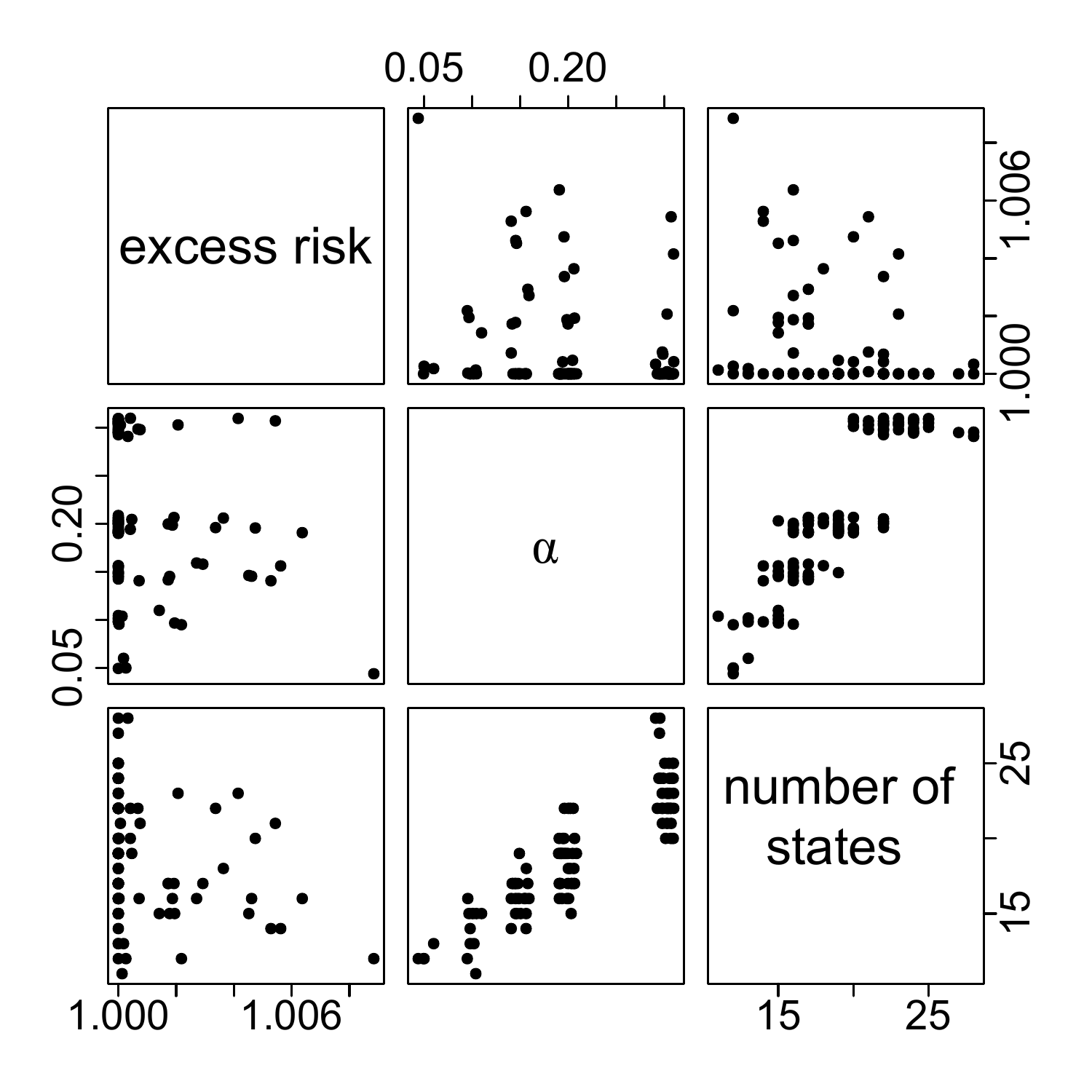}\label{fig:ContCA_h_p=2_sd1_trunc4_outofsampleCV_MSE_nclusters200_params}
  } \hspace{0.02\textwidth} \subfloat[]
  {
  \includegraphics[width=.4\textwidth]{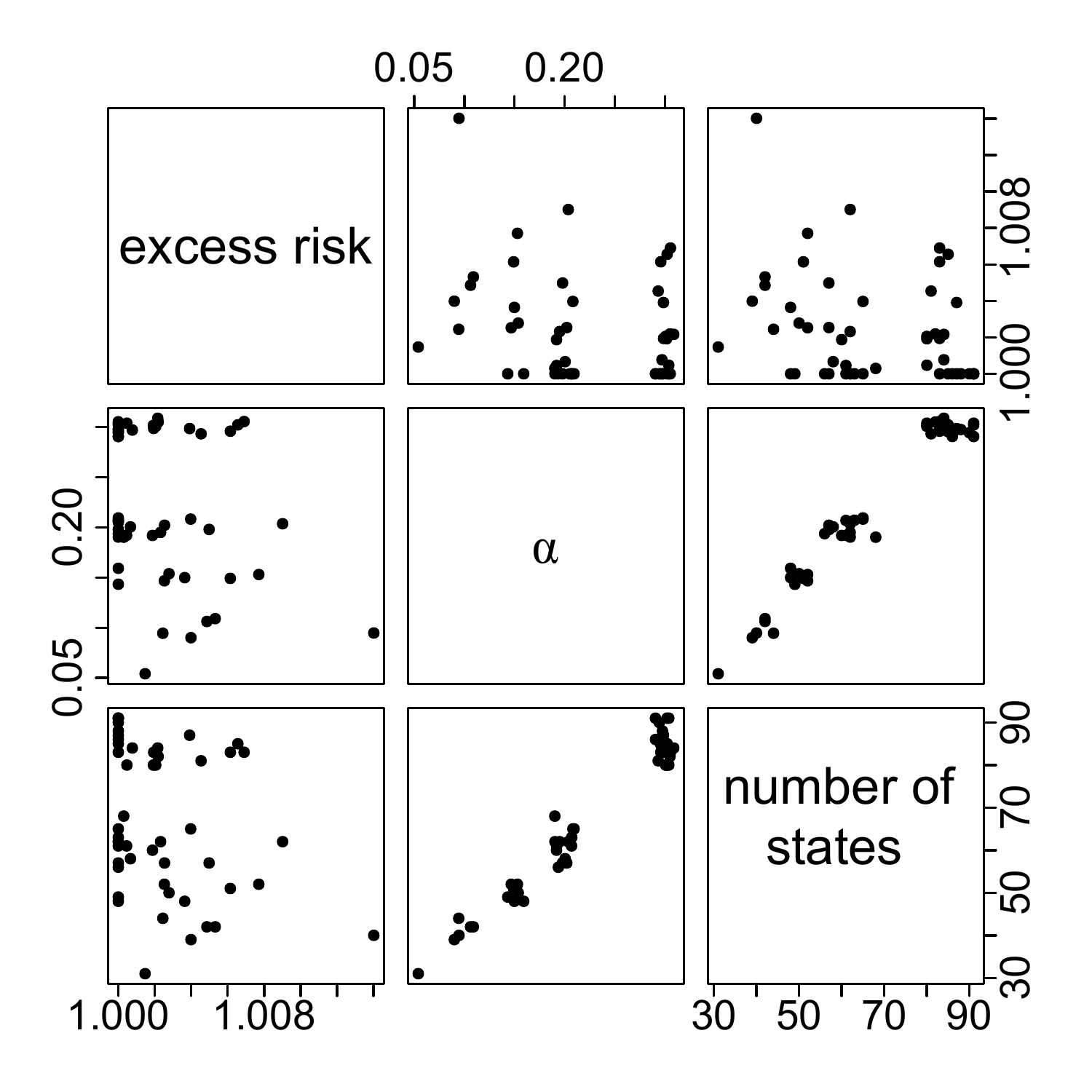}\label{fig:ContCA_h_p=2_sd1_trunc4_outofsampleCV_MSE_knn50_params}
  }

  \caption{\label{fig:CV_Excess_risk_vs_states} Relations between excess risk,
    test size, and the number of reconstructed states for LICORS: (a) selected
    $\alpha$, number of estimated states, and excess risk (Eq.\
    \eqref{eq:MSE_ratio}) for pre-clustered LICORS; (b) the same for
    direct-estimation LICORS. $\alpha$ values in are jittered.}
\end{figure}


\subsection{Discussion of the Simulations}

The simulations showed that LICORS outperforms standard forecasting techniques
by a large margin, even though it presumes very little about the data
source. Especially note that the out-of-sample MSE in Fig.\
\ref{fig:CV_LICOR_comparisons} is still much lower than the best parametric
in-sample MSE in Fig.\ \ref{fig:ERM_AR_VAR_LICOR_nclusters_knn} --- even though
it uses only half the sample size.  The good performance of the CV procedure
(Fig.\ \ref{fig:CV_Overview}) suggests that using it to find control settings
in applications will avoid over-fitting.

In real applications $N$ would typically on the order of millions (rather than
merely $2\times{10}^4$), making pre-clustering essential computationally --- at
least until $\bigO{N^2}$ comparisons for millions of data points become
tractable.  Pre-clustering usually leads to a performance loss as it hides fine
structures in the predictive distribution space (see also the remark 
below Lemma \ref{lem:F_i_is_sample_from_p_i}). However, the in-sample
and out-of-sample MSE comparison showed that this performance loss is small
compared to the gain over standard parametric methods, and further attenuated
with CV.

\section{Proofs}
\label{sec:proofs}

\begin{proof}[Proof of Lemma \ref{lem:conditional_independent_FLC_given_states}]

  \begin{eqnarray}
    \lefteqn{\Prob{\field{L^+}{r}{t}, \field{L^+}{u}{s} \mid
      \field{S}{r}{t}, \field{S}{u}{s}}} & &\\
   & = & \Prob{\field{L^+}{r}{t} \mid \field{L^+}{u}{s}, \field{S}{r}{t}, \field{S}{u}{s}}  \Prob{\field{L^+}{u}{s} \mid \field{S}{r}{t}, \field{S}{u}{s}} \\
   & = & \Prob{\field{L^+}{r}{t} \mid \field{L^+}{u}{s},  \field{S}{r}{t}, \field{S}{u}{s}}  \Prob{\field{L^+}{u}{s} \mid \field{S}{u}{s}} \\
    & = & \Prob{\field{L^+}{r}{t} \mid \field{S}{r}{t}} \Prob{\field{L^+}{u}{s} \mid \field{S}{u}{s}} ~ ,
  \end{eqnarray}
  The first equality is simple conditioning, the second equality holds since
  given the predictive state at $\field{}{u}{s}$ the distribution of $L^+$
  is independent of the predictive state at another $\field{}{r}{t}$, and the
  last equality holds for the same reason as the second plus the non-overlap of
  the FLCs at $\field{}{r}{t}$ and $\field{}{u}{s}$.
\end{proof}

\begin{proof}[Proof of Corollary
  \ref{cor:univariate_FLCs_conditional_independent}]
  The FLC of $\field{}{r}{t}$ with $h_f = 0$ is just the single point
  $\field{X}{r}{t}$. Since two univariate FLCs cannot overlap unless they are
  equal, the result follows immediately from Lemma
  \ref{lem:conditional_independent_FLC_given_states}.
\end{proof}

\begin{proof}[Proof of Lemma \ref{lem:F_i_is_sample_from_p_i}]
  By contradiction.  Assume that $\ell_{j}^{-}$ and $\ell^{-}_{k}$, with $j, k
  \in I_i(\delta)$, have different predictive states, without loss of
  generality $\epsilon_1$ and $\epsilon_2$.  By Assumption
  \ref{ass:distance_between_states}, then, $\KL{\epsilon_1}{\epsilon_2}$ and
  $\KL{\epsilon_2}{\epsilon_1}$ are both at least $d_{\min}$.  By the
  definition of $I_i(\delta)$, $\vnorm{\ell_j^{-} - \ell_k^{-}} < 2\delta$.  By
  Assumption \ref{ass:continuous_dynamics}, then, $\KL{\epsilon_1}{\epsilon_2}$
  and $\KL{\epsilon_2}{\epsilon_1}$ are both at most $\rho(2\delta)$.  But by
  making $\delta$ sufficiently small, $\rho(2\delta)$ can be made as small as
  desired, and in particular can be made less than $d_{\min}$.  This is a
  contradiction, so all the past cone configurations in $I_i(\delta)$ must be
  predictively equivalent.
\end{proof}

\begin{proof}[Proof of Corollary \ref{cor:likelihood_factorizes}]
  Immediate from combining Lemmas \ref{lem:F_i_is_sample_from_p_i} and
  \ref{lem:conditional_independent_FLC_given_states}.
\end{proof}


\begin{proof}[Proof of Theorem \ref{thm:consistent_unknown}]
  Before going into the formal proof, we make an observation regarding
  non-parametric two-sample tests.  Most of these, to have good operating
  characteristics, require independent samples.  Since we will
be applying the tests to $\mathbf{F}_i(\delta)$ and $\mathbf{F}_j(\delta)$,
\begin{properties}[Pairwise independent samples]
    \label{ass:independent_samples}
    If
    \begin{equation}
      \label{eq:no_intersection}
      I_i(\delta) \cap I_j(\delta) = \emptyset.
    \end{equation}
    then the samples $\mathbf{F}_i(\delta)$ are independent of
    $\mathbf{F}_j(\delta)$, $j \neq i$ (see
    \eqref{eq:indices_PLC_neighborhood}).
  \end{properties}

  Let $\Delta_{ij} := \vnorm{ \ell_i^{-} - \ell_j^{-} }$. If $\Delta_{ij} > 2
  \delta$, then \eqref{eq:no_intersection} is satisfied. If $\Delta_{ij} < 2
  \delta$, then a sample in $\mathbf{F}_i(\delta)$ might also appear in
  $\mathbf{F}_j(\delta)$ and therefore violate the independence assumption for
  two sample tests.

  For these rare cases redefine the index set $I_i(\delta)$ and $I_j(\delta)$
  such that \eqref{eq:no_intersection} holds. We can achieve this by excluding
  the intersection, split it in half ( $\pm 1$ sample), and then
  re-assign these halves to each index set.  For all pairs $i \neq j$ 
  determine $I_i(\delta) \cap I_j(\delta) =: I_{i \cap j}(\delta)$. Then let
  \begin{align}
    \label{eq:redefine_indices_i}
    I_i & := I_i \setminus I_{i \cap j} \cup \lbrace i_1, \ldots, i_{\card{I_{i \cap j}}/2}\mid i_k \in I_{i \cap j} \rbrace\\
    \label{eq:redefine_indices_j}
    \text{ and } I_j & := I_j \setminus I_{i \cap j} \cup \lbrace i_{\card{I_{i
          \cap j}}/2}, \ldots, i_{\card{I_{i \cap j}}} \mid i_k \in I_{i \cap
      j} \rbrace.
  \end{align}

  If $I_{i \cap j} = \emptyset$,
  \eqref{eq:redefine_indices_i}--\eqref{eq:redefine_indices_j} does not change
  the index set; if $I_{i \cap j} \neq \emptyset$, then
  \eqref{eq:redefine_indices_i}--\eqref{eq:redefine_indices_j} guarantees an
  empty intersection.

  The proof of consistency relies crucially on a growing index set $I_i$. The
  re-definition in \eqref{eq:redefine_indices_i}--\eqref{eq:redefine_indices_j}
  does not change the rate at which $S_i(N, \delta)$ grows, because in the
  worst case (for very close PLCs) it just divides $s_{i}(N, \delta)$ and
  $s_{j}(N, \delta)$ in half.

  \paragraph{Proof:}
  We first bound the error for each row $\widehat{\mathbf{A}}_i$, and then use
  a union bound for the probability of error for $\widehat{\mathbf{A}}$.

  \paragraph{Bound error per row}
  For each row $T_{n,m}$ tests $H_0: \ell_i \sim \ell_j$, $j > i$ (due to
  symmetry the cases $j<i$ have already been tested before) based on the sample
  $\mathbf{F}_i(\delta) \sim \epsilon_i$ and $\mathbf{F}_j(\delta) \sim
  \epsilon_j$. The worst-case distance $d$ for the non-parametric test in
  Assumption \ref{ass:existence_consistent} is $d = d_{\min}$. For simplicity
  consider the first row: here we have to make $N-1$ tests, of which $N_1 - 1$
  should correctly accept, and $N - N_1$ should correctly reject equality of
  distributions.

  \begin{eqnarray}
    \Prob{\widehat{\mathbf{A}}_j \neq \mathbf{A}_j} 
    & \leq & (N_j - 1) \Prob{ \left( \text {type I} \right) + (N - N_j)} \Prob{\text {type II}} \\
    & \leq & (N_j - 1) \alpha + (N - N_j) \beta\left( \alpha, S_{\min}(N, \delta), S_{\min}(N, \delta)  \right) \\
    & \leq & N_j \alpha + (N - N_j) \beta\left( \alpha, S_{\min}(N, \delta), S_{\min}(N, \delta) \right)
  \end{eqnarray}
  since the worst case, for type II error, is that both samples are as small as
  possible.

  \paragraph{Bound error for entire matrix}
  The probability of error for the entire predictive state clustering can again
  be bounded using the union bound:

  \begin{align}
    \Prob{ \widehat{\mathbf{A}} \neq \mathbf{A}} & = \Prob{\bigcup_{j = 1}^{N}{ \lbrace \widehat{\mathbf{A}}_j \neq \mathbf{A}_j \rbrace}} \\
    & \leq \sum_{j=1}^{N}{\Prob{\widehat{\mathbf{A}}_j \neq \mathbf{A}_j}} \\
    & \leq N \left( N_{\max} \alpha + (N - N_{\min}) \beta\left( \alpha, S_{\min}(N, \delta), S_{\min}(N, \delta) \right) \right) \\
    & = N N_{\max} \alpha + (N^2 - N N_{\min}) \beta\left( \alpha, S_{\min}(N,
      \delta), S_{\min}(N, \delta) \right),
  \end{align}
  where $N_{\max} = \max_{j} N_j$ is the number of light cones in the largest
  predictive state.

  Under Assumption \ref{ass:existence_consistent}, $\alpha$ and $\beta$ are
  both $\littleO{N N_{\max}}$, so the over-all arrow probability tends to zero.
\end{proof}

\bibliographystyleapp{plainnat}
\bibliographyapp{../../../bib/PhD_thesis,../../../bib/locusts}

\end{document}